\newcommand{\comment}[1]{}
\newcommand{\tr}{\mathrm{tr}}
\newcommand{\Nbb}{\mathbb{N}}
\newcommand{\Rbb}{\mathbb{R}}
\newcommand{\Cbb}{\mathbb{C}}
\newcommand{\Bcal}{\mathcal{B}}
\newcommand{\Ccal}{\mathcal{C}}
\newcommand{\Ical}{\mathcal{I}}
\newcommand{\Lcal}{\mathcal{L}}
\newcommand{\Scal}{\mathcal{S}}
\newcommand{\supp}{{\rm supp}}
\newcommand{\id}{\mathrm{id}}
\newcommand{\kt}{{\kappa^{(t,s)}}}
\newcommand{\eps}{\varepsilon}
\newcommand{\dt}{\frac{d}{dt}}
\renewcommand{\tilde}{\widetilde}
\renewcommand{\bar}{\overline}
\renewcommand{\phi}{\varphi}
\renewcommand{\div}{\mathrm{div}_z}
\newcommand{\Zcal}{\mathcal{Z}(q,p)}
\newcommand{\Zcalinv}{\mathcal{Z}^{-1}(q,p)}
\newcommand{\Zcalinvt}{\mathcal{Z}^{-1}(t,q,p)}
\newcommand{\Zcalts}{\mathcal{Z}(t,s,q,p)}
\newcommand{\Zcalinvts}{\mathcal{Z}^{-1}(t,s,q,p)}
\newtheorem{proposition}{Proposition}
\newtheorem{theorem}{Theorem}
\newtheorem{lemma}{Lemma}
\newtheorem{corollary}{Corollary}
\newtheorem{definition}{Definition}
\theoremstyle{remark}
\newtheorem{remark}{Remark}
\title{\bf A Mathematical Justification for the Herman-Kluk Propagator}
\author{Torben SWART and Vidian ROUSSE\\
{\it \small Freie Universit\"at Berlin }}
\begin{document}
\maketitle
\begin{abstract}
A class of Fourier Integral Operators which converge to the unitary group of the Schr\"odinger equation in semiclassical limit $\eps\to 0$ is constructed. The convergence is in the uniform operator norm and allows for an error bound of order $O(\eps^{1-\rho})$ for Ehrenfest timescales, where $\rho$ can be made arbitrary small. For the shorter times of order $O(1)$, the error can be improved to arbitrary order in $\eps$. In the chemical literature the approximation is known as the Herman-Kluk propagator.
\end{abstract}
\section{Introduction}
We study approximate solutions of the semiclassical time-dependent Schr\"odinger equation
\begin{equation}\label{eq:TDSE}
i\eps\dt\psi^\eps(t)=-\frac{\eps^2}{2}{\rm\Delta}\psi^\eps(t)+V(x)\psi^\eps(t),\qquad\psi^\eps(0)=\psi^\eps_0\in L^2(\Rbb^d,\Cbb)
\end{equation}
in the semiclassical limit $\eps\to 0$. The operator $H^\eps:=-\frac{\eps^2}{2}{\rm\Delta}+V(x)$ on the right-hand side of \eqref{eq:TDSE} is the so-called Hamiltonian, a self-adjoint operator on $L^2(\Rbb^d,\Cbb)$.
It is well-known that the solution of~\eqref{eq:TDSE} can be written as
$$
\psi^\eps(t)=e^{-\frac{i}{\eps}H^\eps t}\psi^\eps_0,
$$
where the group of unitary operators $e^{-\frac{i}{\eps}H^\eps t}$ is defined by the spectral theorem.

The semiclassical parameter $\eps$ may be thought of as the quantum of action $\hbar$, but there are also situations, where $\eps$ has a different meaning. One example is provided by Born-Oppenheimer molecular dynamics, where equation~\eqref{eq:TDSE} describes the semiclassical motion of the nuclei of a molecule in the case of well-separated electronic energy surfaces and $\eps$ is the square root of the ratio of the electronic mass and the average nuclear mass. In this case, the $\eps$ in front of the time-derivative in~\eqref{eq:TDSE} is due to a rescaling of time $\tilde{t}=t/\eps$. This particular choice, the so-called ``distinguished limit'' (see~\cite{[Cole]}) produces the most interesting results in the semiclassical limit $\eps\to 0$.

To formulate our main result, we introduce the following class of Fourier Integral Operators (FIOs):
\begin{equation}\label{eq:FIO}
\Ical^\eps(\kappa^t; u)\phi(x):=\frac{1}{(2\pi\eps)^{3d/2}}\int_{\Rbb^{3d}} e^{\frac{i}{\eps}\Phi^{\kappa^t}(x,y,q,p)}
u(x,y,q,p) \phi(y)\:dq\:dp\:dy,
\end{equation}
where
\begin{itemize}
\item $\kappa^t(q,p)=\left(X^{\kappa^t}(q,p),\Xi^{\kappa^t}(q,p)\right)$ is a $C^1$-family of \emph{canonical transformations} of the classical phase space $T^\ast\Rbb^d=\Rbb^d\times\Rbb^d$,
\item $S^{\kappa^t}(q,p)$ is the associated \emph{classical action}
$$
S^{\kappa^t}(q,p)=\int\limits_0^t\left[ \dt X^{\kappa^\tau}(q,p)\cdot\Xi^{\kappa^\tau}(q,p)-(h\circ\kappa^\tau)(q,p)\right]\;d\tau,
$$
\item the complex-valued {\em phase function} is given by
\begin{align}
\Phi^{\kappa^t}(x,y,q,p)&=S^{\kappa^t}(q,p)+\Xi^{\kappa^t}(q,p)\cdot\left(x-X^{\kappa^t}(q,p)\right)-p\cdot(y-q)\nonumber\\
&\quad+\frac{i}{2}\left|x-X^{\kappa^t}(q,p)\right|^2+\frac{i}{2}\left|y-q\right|^2\label{eq:phase}
\end{align}
\item and the {\em symbol} $u$ is a smooth complex-valued function which is bounded with all its derivatives.
\end{itemize}
For this class of operators, the authors previously established an $L^2$-boundedness result, see \cite{[RousseSwart]}. The central result of this paper reads

\textbf{Theorem. }{\it 
Let $e^{-\frac{i}{\eps} H^\eps t}$ be the propagator defined by the time-dependent Schr\"o\-dinger equation~\eqref{eq:TDSE} on the time-interval $[-T,T]$ with subquadratic potential $V\in C^\infty(\Rbb^d,\Rbb)$, i.e.
$\sup_{x\in\Rbb^{d}}|\partial^\alpha_{x}V(x)|<\infty$ for all $\alpha\in\Nbb^{d} \textrm{ with }|\alpha|\geq 2$.

Then
$$
\sup_{t\in [-T,T]}\left\|e^{-\frac{i}{\eps} H^\eps t}-\Ical^\eps\left(\kappa^t; u\right)\right\|_{L^2\to L^2}\leq C(T)\eps,
$$
where $\kappa^t=(X^{\kappa^t},\Xi^{\kappa^t})$ and $u$ are uniquely given as
\begin{itemize}
\item the flow associated to the classical Hamiltonian $h(x,\xi)=\frac12|\xi|^2+V(x)$
\begin{align*}
\dt X^{\kappa^t}(q,p)&=\Xi^{\kappa^t}(q,p)&X^{\kappa^0}(q,p)=q\\
\dt\Xi^{\kappa^t}(q,p)&=-\nabla V\left(X^{\kappa^t}(q,p)\right)&\Xi^{\kappa^0}(q,p)=p
\end{align*}
and
\item the solution of the Cauchy-problem
\begin{align*}
\dt u(t,q,p)&=\frac{1}{2}u(t,q,p) \tr\left(\mathcal{Z}^{-1}\left(F^{\kappa^t}(q,p)\right)\dt \mathcal{Z}\left(F^{\kappa^t}(q,p)\right)\right)\\
u(0,q,p)&=2^{d/2}.
\end{align*}
\end{itemize}
The $\Cbb^{d\times d}$-valued function
\begin{align*}
\mathcal{Z}\left(F^{\kappa^t}(q,p)\right)&=(i\,\id\quad \id)F^{\kappa^t}(q,p)^\dagger\begin{pmatrix}-i\,\id\\ \id\end{pmatrix}\\
&=X^{\kappa^t}_q(q,p)-i X^{\kappa^t}_p(q,p)+i\Xi^{\kappa^t}_q(q,p)+\Xi^{\kappa^t}_p(q,p),
\end{align*}
depends on elements of the transposed Jacobian
\begin{align*}
F^{\kappa^t}(q,p)^\dagger=
\begin{pmatrix}
X^{\kappa^t}_q(q,p)&\Xi^{\kappa^t}_q(q,p)\\
X^{\kappa^t}_p(q,p)&\Xi^{\kappa^t}_p(q,p)\\
\end{pmatrix}
\end{align*}
of $\kappa^t$ with respect to $(q,p)$.
}

The equation for $u$ is easily solved. Its solution is the so-called Herman-Kluk prefactor
$$
u(t,q,p)=\left(\det\mathcal{Z}\left(F^{\kappa^t}(q,p)\right)\right)^\frac12,
$$ 
where the branch of the square root is chosen by continuity in time starting from $t=0$. We presented a simplified version of our main result. Theorem~\ref{theo:main_result} will essentially add three central aspects. First, we will state it for more general Hamilton operators, namely certain Weyl-quantised pseudodifferential operators. Second, for the Ehrenfest-timescale $T(\eps)=C_T\log(\eps^{-1})$ the result still holds with a slightly weaker bound. Third, the error estimate can be improved to $\eps^N$, where $N$ is arbitrary large by adding a correction of the form
$\sum_{n=1}^{N-1}\eps^n u_n$ to $u$. As $u$, the $u_n$ are solutions of explicitly solvable Cauchy-problems.

Whereas there is an abundant number of works on Fourier Integral Operators in the mathematical literature, only few of them discuss the relation between FIOs and the time-dependent Schr\"odinger-equation. The first works which apply FIOs with real-valued phase function to this problem are~\cite{[KitadaKumanoGo]} and~\cite{[Kitada]}. In this case one has to deal with the boundary value problem
$$
\textrm{Given }x,y\in\Rbb^d\textrm{, find }p\textrm{ such that }X^{\kappa^t}(y,p)=x.
$$
To get uniqueness for its solution one has either to restrict to short times $t$ or to impose very strong restrictions on the potential. The same problems are met in~\cite{[Fujiwara1]}, where Fujiwara applies a related class of operators without integral in the oscillatory kernel to the Schr\"odinger equation to justify the time-slicing approach for Feynman's path integrals.

The avoidance of this problem is the major advantage of complex-valued phase functions. In the non-semiclassical setting, Tataru shows in \cite{[Tataru]} that the unitary group of time evolution is an FIO with complex-valued phase function (different from~\eqref{eq:phase}). He also establishes that the simpler choice $u(t,q,p)=2^{d/2}$ leads to a parametrix for the non-semiclassical Schr\"odinger equation.

A class of operators related to~\eqref{eq:FIO} is used in the works~\cite{[LaptevSigal]} and~\cite{[Butler]} for the construction of approximate solutions of the semiclassical time-dependent Schr\"o\-dinger equation. In their case, the kernel consists of an integral over the momentum space in contrast to the phase-space integral in our expression
$$
\left(\tilde{\Ical}^\eps(\kappa^t;\tilde{u})\psi\right)(x)=\frac{1}{(2\pi\eps)^d}\int_{T^\ast\Rbb^d} e^{\frac{i}{\eps}\Phi^{\kappa^t}(x,y,y,p)}\tilde{u}(t,y,p)\psi(y)\;dp\,dy.
$$
Moreover, these works only allow compactly supported symbols, which enforces the truncation of the Hamiltonian in momentum. Finally there is the work of Bily and Robert~\cite{[BilyRobert]}, which treats the so-called Thawed Gaussian Approximation discussed below.

In contrast to the mathematical literature connecting time-dependent Schr\"o\-dinger equation and Fourier Integral Operators, there is an abundant number of papers in chemical journals on this topic. Nevertheless, the focus is mainly put on three approximations: the ``Thawed Gaussian Approximation'' (TGA), the ``Frozen Gaussian Approximation'' (FGA) and the Herman-Kluk expression. Confusingly, in the chemical literature both TGA and FGA do not only refer to specific algorithms but they are also used to describe whole classes of approximations. For example, the Herman-Kluk approximation is sometimes considered as an FGA, whereas the TGA refers both to the time-evolution of a coherent state and a Fourier Integral Operator. We give a short formal discussion of the most important methods in the rest of this introduction hinting to related rigorous results.

The starting point is the following identity, which holds for $\psi\in L^2(\Rbb^d,\Cbb)$:
\begin{equation}\label{eq:Overcomplete}
\psi(x)=\frac{1}{(2\pi\eps)^{d}}\int_{T^\ast\Rbb^d}g^{\eps}_{(q,p)}(x) \langle g^{\eps}_{(q,p)},\psi \rangle\;dq\,dp,
\end{equation}

where
\begin{equation}\label{eq:CS}
g^{\eps}_{(q,p)}(x)=\frac{1}{(\pi\eps)^{d/4}}e^{-|x-q|^2/2\eps}e^{ip\cdot(x-q)/\eps}
\end{equation}
denotes the coherent state centered at $(q,p)$ in phase space $T^\ast\Rbb^d$.
Within the chemical community, equation~\eqref{eq:Overcomplete} is heuristically explained as an ``expansion in an overcomplete set of Gaussians'', but the equality can be made rigorous with help of the FBI-transform, consider~\cite{[Martinez]}.
 Applying the unitary group of~\eqref{eq:TDSE} to expression~\eqref{eq:Overcomplete}, one gets the formal equality
\begin{equation}\label{eq:TDSE_CS}
\left(e^{-\frac{i}{\eps}op^\eps(h)t}\psi_0^\eps\right)(x)=\frac{1}{(2\pi\eps)^{d}}\int_{T^\ast\Rbb^d}\left(e^{-\frac{i}{\eps}op^\eps(h)t}g^{\eps}_{(q,p)}\right)(x) \langle g^{\eps}_{(q,p)},\psi_0^\eps \rangle\;dq\,dp.
\end{equation}
Hence, one expects an approximation to the solution of~\eqref{eq:TDSE} if the following approximate expression for the time-evolution of coherent states is used in~\eqref{eq:TDSE_CS}
\begin{align}
&\left(e^{-\frac{i}{\eps}op^\eps(h)t}g^\eps_{(q,p)}\right)(x)\approx\frac{1}{(\pi\eps)^{d/4}}\left[\det\left(X_q^{\kappa^t}(q,p)+iX_p^{\kappa^t}(q,p)\right)\right]^{-\frac12}\label{eq:approx_CS}\\
&\quad\times e^{\frac{i}\eps S^{\kappa^t}(q,p)}\; e^{-(x-X^{\kappa^t}(q,p))\cdot\Theta^{\kappa^t}(q,p)(x-X^{\kappa^t}(q,p))/2\eps}\; e^{i\Xi^{\kappa^t}(q,p)\cdot(x-X^{\kappa^t}(q,p))/\eps} \nonumber
\end{align}
with
$$
\Theta^{\kappa^t}(q,p)=-i\left(\Xi_q^{\kappa^t}(q,p)+i\Xi_p^{\kappa^t}(q,p)\right)\left(X_q^{\kappa^t}(q,p)+iX_p^{\kappa^t}(q,p)\right)^{-1}.
$$
In the chemical literature~\eqref{eq:approx_CS} was first derived in~\cite{[TGA]}. For rigorous mathematical results consider \cite{[Hagedorn]}, \cite{[Hagedorn1]} or~\cite{[CombescureRobert]}. As the coherent state changes its width, expression~\eqref{eq:approx_CS} and the resulting operator were baptised ``Thawed Gaussian Approximation''.

However, it turns out numerically (see e.g. the computations in~\cite{[HarabatiRostGrossmann]}) that more accurate approximations are obtained if one drops the time-dependent spreading and uses expressions like~\eqref{eq:FIO}. In the simplest case, the symbol $u\equiv 1$ is held constant in $t$, $q$ and $p$. This approximation is known as the ``Frozen Gaussian Approximation'' and holds only for times of order $O(\eps)$, see the remark after Theorem~\ref{theo:main_result}. To get to the longer times of order $O(1)$, the more sophisticated choice of $u(t,q,p)$ as the Herman-Kluk prefactor is needed, see~\cite{[HermanKluk]} for the original work and~\cite{[Kay1]} and~\cite{[Kay2]} for works, which are methodically related to our presentation. Moreover, the latter of them presents the first derivation of the higher order corrections.

\subsection*{Organisation of the paper and notation}

The paper is organised in the following way. Section 2 will set the stage for the discussion of our approximation. Here we will recall central definitions and results on Fourier Integral Operators, first and foremost their definition and well-definedness on the functions of Schwartz class as well as their bound as operators acting on $L^2(\Rbb^d,\Cbb)$, see Definition~\ref{defi:FIO} and Theorem~\ref{theo:FIO}. Most of the results of this section can be found in~\cite{[RousseSwart]} and we refer the reader to that paper for a more detailed discussion and motivation of them. In Section 3 we will prove results on the composition of Weyl-quantised pseudo-differential operators and Fourier Integral Operators, see~Proposition~\ref{prop:comp_PDO}. Moreover, we will investigate the time-derivative of a $C^1$-family of Fourier Integral Operators in Proposition~\ref{prop:comp_time}. These results will lead to our main result, which we will state in Theorem~\ref{theo:main_result}.

We close this introduction by a short discussion of the notation. Throughout this paper, we will use standard multiindex notation. Vectors will always be considered as column vectors. The inner product of two vectors $a, b\in\Rbb^d$ will be denoted as $a\cdot b=\sum_{j=1}^da_j b_j$ and extended to vectors $a,b\in\Cbb^d$ by the same formula. The transpose of a matrix $A$ will be $A^\dagger$, whereas $A^\ast:=\bar{A}^\dagger$ denotes the adjoint and finally $e_j$ will stand for the $j$th canonical basis vector of $\Rbb^d$ or $\Cbb^d$.

For a differentiable mapping $F\in C^1(\Rbb^d,\Cbb^d)$, we will use both $(\partial_x F)(x)$ and $F_x(x)$ for the transpose of its Jacobian at $x$, i.e. $((\partial_x F)(x))_{jk}=(F_x(x))_{jk}=(\partial_{x_j} F_k)(x)$. This leads to the identity $\partial_x (F\cdot G)=G_x F+F_x G$ for $F,G\in C^1(\Rbb^d,\Cbb^d)$. The Hessian matrix of a mapping $F\in C^2(\Rbb^d,\Cbb)$ will be denoted by $\textrm{Hess}_{x}F(x)$.

For the sake of better readability of the formulae, we will be somewhat sloppy with respect to the distinction between functions and their values. As a crucial example, we will write $(x-X^\kappa(q,p))v$ for the function $(x,y,q,p)\mapsto(x-X^\kappa(q,p))v(x,y,q,p)$.

When dealing with canonical transformations, we introduce the following notations for a complex linear combinations of the components:
\begin{align*}
Z^\kappa(q,p)&:=\left(\Theta^x\right)^{\frac12}X^\kappa(q,p)+i\left(\Theta^x\right)^{-\frac12}\Xi^\kappa(q,p)\\
\bar{Z}^\kappa(q,p)&:=\left(\Theta^x\right)^{\frac12}X^\kappa(q,p)-i\left(\Theta^x\right)^{-\frac12}\Xi^\kappa(q,p).
\end{align*}
We want to point out that $\bar{Z}^\kappa(q,p)$ is not the complex conjugate of $Z^\kappa(q,p)$ for non-real matrices $\Theta^x$. The matrix square root of a positive definite matrix will always be chosen as the unique positive definite square root. We want to point out that both the determinant of this matrix-square root and the square root of a determinant will appear in this paper.

We define $z:=\Theta^y q+ip$, $\partial_z:=\left(\Theta^y\right)^{-1}\partial_q-i\partial_p$ and
$$
\div X(q,p)=\sum_{k=1}^{d}\left(\Theta^y\right)^{-1}_{jk}\partial_{q_k} X_j(q,p)
-i\sum\limits_{j=1}^{d}\partial_{p_j}X_j(q,p)
$$
for functions $X\in C^1(\Rbb^{2d},\Cbb^d)$, regardless whether they are row or column vectors. With these definitions the identity $\div X(q,p)=\tr  X_z(q,p)$ still holds. Finally, we mention that the expression $\dt X^\kt(q,p)\cdot\Xi^\kt(q,p)$ denotes the inner product of $\dt X^\kt(q,p)$ and $\Xi^\kt(q,p)$.

\subsection*{Acknowledgement}
The authors want to thank Caroline Lasser for many profitable discussions and valuable comments.

\section{Canonical Transformations and Fourier Integral Operators}\label{section:recall}
In this section, we specialise the central definitions and results of~\cite{[RousseSwart]} to the case of Hamiltonian flows.
\subsection{Symbol classes and canonical transformations}
The definition of our FIOs involves two fundamental objects. One of them is a smooth complex-valued function, the so-called symbol. The following definition deviates from~\cite{[RousseSwart]} by the additional $\eps$-dependence.

\begin{definition}[Symbol class]
Let  ${\bf m}=(m_j)_{1\leq j\leq J}\in\Rbb^J$ and ${\bf d}=(d_j)_{1\leq j\leq J}\in\Nbb^J$. We say that $u:\:]0,1]\times\Rbb^{\bf |d|}\to\Cbb^N$ is a \textbf{symbol of class $S[{\bf m}; {\bf d}]$}, if there is $\eps_0<1$, such that $u^\eps\in C^\infty(\Rbb^{\bf |d|},\Cbb^N)$ for all $\eps\leq\eps_0$ and the following quantities are finite for any $k\geq0$
\begin{equation}
M^m_k[u]:=\sup_{\eps\leq\eps_0}\max_{|\alpha|=k}\sup_{z\in\Rbb^{|{\bf d}|}}\left|\prod_{j=1}^J\langle z_j\rangle^{-m_j}\partial_z^\alpha u^\eps(z)\right|,
\end{equation}
where $\langle z\rangle:=\sqrt{1+|z|^2}$.
We extend this definition by setting
\begin{equation}
S[+\infty;{\bf d}]=\bigcup_{m_1\in\Rbb}\ldots\bigcup_{m_J\in\Rbb}S[(m_1,\ldots,m_J);{\bf d}].
\end{equation}
\end{definition}

The second central object in the definition of a Fourier Integral Operator is a canonical transformation of the classical phase space.
\begin{definition}[Canonical transformation]
Let $\kappa(q,p)=(X^\kappa(q,p),\Xi^\kappa(q,p))$ be a diffeomorphism of $T^\ast\Rbb^d=\Rbb^d\times\Rbb^d$. We represent its differential by the following Jacobian matrix
\begin{equation} \label{eq:Jacobian}
F^\kappa(q,p)=
\begin{pmatrix}
X^\kappa_q(q,p)^\dagger & X^\kappa_p(q,p)^\dagger \\
\Xi^\kappa_q(q,p)^\dagger & \Xi^\kappa_p(q,p)^\dagger
\end{pmatrix}.
\end{equation}
$\kappa$ is said to be a \textbf{canonical transformation} if $F^\kappa(q,p)$ is symplectic for any $(q,p)$ in $T^\ast\Rbb^d$, i.e.
\begin{equation*}
F^\kappa(q,p)\in {\rm Sp}(2d):=\left\{S\in {\rm Gl}(2d)\middle| S^\dagger J S=J \right\}
\quad\textrm{ with }\quad J:=
\begin{pmatrix}
0&\id\\
-\id&0
\end{pmatrix}.
\end{equation*}
\end{definition}

To get good properties for our operators, we need to restrict the class of canonical transformations under consideration.
\begin{definition}[Canonical transformation of class $\Bcal$]
A canonical transformation $\kappa$ of $T^\ast\Rbb^d$ is said to be \textbf{of class $\Bcal$} if $F^\kappa\in S[0;2d]$. A time-dependent family of canonical transformations $\kappa^t$ will be called \textbf{of class $\Bcal$} in $[-T,T]$ if it is pointwise continuously differentiable with respect to time and we have for all $k\geq 0$
$$
\sup_{t\in [-T,T]}M_k^0\left[F^{\kappa^t}\right]<\infty\quad\textrm{ and }\quad\sup_{t\in [-T,T]}M_k^0\left[\dt F^{\kappa^t}\right]<\infty.
$$
In particular $F^{\kappa^t}$ and $\dt F^{\kappa^t}$ are of class $S[0;2d]$ pointwise for $t\in [-T, T]$.
\end{definition}

We also have to restrict the Hamiltonians we use.
\begin{definition}
A time-dependent Hamiltonian $h\in C(\Rbb,C^\infty(\Rbb^{2d},\Cbb))$ is called \textbf{subquadratic}, if
\begin{equation}
\sup_{-T\leq t\leq T}\sup_{(x,\xi)\in\Rbb^d\times\Rbb^d}\Vert\partial^\alpha_{(x,\xi)}h(t,x,\xi)\Vert_{L^\infty}
\end{equation}
is finite for all $|\alpha|\geq 2$ and $T>0$. It is called \textbf{sublinear}, if the quantity is finite for all $|\alpha|\geq 1$.
\end{definition}

The next result will investigate the relation between classical Hamiltonians and the flows they generate.
\begin{proposition}\label{prop:subquadratic}
If $h\in C(\Rbb,C^\infty(\Rbb^{2d},\Cbb))$ is a time-dependent subquadratic Ha\-miltonian, the Hamiltonian flow $\kt$ generated by $h$,
\begin{align}\label{eq:HSt}
\dt \kt=J \nabla_{(x,\xi)} h(t,\kt),\quad \kappa^{(s,s)}=\id
\end{align}
is a family of canonical transformations of class $\Bcal$ in $[-T,T]$. Moreover, every Hamiltonian flow of class $\Bcal$ is generated by a subquadratic Hamiltonian.

Under the additional assumption $\|\partial^\alpha_{(x,\xi)} h\|_{L^\infty(\Rbb\times\Rbb^{2d})}<\infty$ for all $2\leq|\alpha|\leq n_0+2$, we have 
$$
\sup_{|t-s|<T(\eps)}M_k^0\left[F^\kt\right]\leq C_k (2C_T)^k\left|\log\eps\right|^{k}\eps^{-2 K_0^hC_T},
$$
for all $k\leq n_0$ on the Ehrenfest timescale $T(\eps)=C_T\log\eps^{-1}$, where
$$
K^h_k(T)=\max_{|\alpha|=k}\sup_{-T\leq t\leq T}\sup_{(x,\xi)\in\Rbb^d\times\Rbb^d}\Vert\partial^\alpha_{(x,\xi)}\mathrm{Hess}_{(x,\xi)}h(t,x,\xi)\Vert.
$$
\end{proposition}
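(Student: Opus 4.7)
My plan is to treat the three assertions in order; the first and the third share a common Gr\"onwall backbone, while the converse is a separate symplectic-geometry argument. For the forward direction, subquadraticity of $h$ gives $|\nabla_{(x,\xi)}h(t,z)|\le C(1+|z|)$ (by integrating from the bounded Hessian), so Cauchy--Lipschitz produces a global $C^1$-flow $\kt$. Differentiating the flow equation in the initial point, the Jacobian $F:=F^\kt$ solves the linearised equation
$$
\dt F = J\,\mathrm{Hess}_{(x,\xi)}h(t,\kt)\,F,\qquad F(s,s)=\id.
$$
Since $J\,\mathrm{Hess}(h)$ lies in the Lie algebra of $\mathrm{Sp}(2d)$, one checks $\dt(F^\dagger J F)=0$, so $F$ remains symplectic, and Gr\"onwall gives $\|F(t)\|\le e^{K_0^h(T)|t-s|}$. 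Differentiating the linearised equation $k$ times in $(q,p)$ yields inhomogeneous linear ODEs for each $\partial^\alpha F$ whose source terms are, via Fa\`a di Bruno, polynomials in the lower-order derivatives $\{\partial^\beta F\}_{|\beta|<|\alpha|}$ and in the uniformly bounded quantities $K_j^h(T)$ for $j\le k$. Induction on $|\alpha|$ combined with Duhamel's formula gives $M_k^0[F^\kt]<\infty$ uniformly on $[-T,T]$; the bound on $\dt F^\kt$ and its $(q,p)$-derivatives is read off directly from the linearised equation.

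For the converse, given $\kt$ of class $\Bcal$, introduce the time-dependent vector field $V(t,z):=(\dt \kt)(\kappa^{(s,t)}(z))$, which is $s$-independent by the cocycle property of the flow. Canonicality of each $\kt$ makes $V$ a symplectic vector field on $\Rbb^{2d}$, hence globally of the form $J\nabla h$ for some $h$, unique up to an additive constant (Poincar\'e lemma on $\Rbb^{2d}$). The chain rule gives $\partial_z V=(\dt F^\kt)(\kappa^{(s,t)})\cdot F^{\kappa^{(s,t)}}$; the first factor lies in $S[0;2d]$ by the class-$\Bcal$ hypothesis, and the second since $F^{\kappa^{(s,t)}}=-J(F^\kt)^\dagger J\circ\kappa^{(s,t)}$ is the inverse of a bounded symplectic matrix, which by induction on derivative order also stays in $S[0;2d]$. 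Higher derivatives of $V$ are controlled analogously by iterated Fa\`a di Bruno, so $\partial^\alpha h$ is bounded for every $|\alpha|\ge 2$ and $h$ is subquadratic; continuity in $t$ is inherited from the $C^1$-in-time assumption on $\kt$.

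For the Ehrenfest-scale refinement, I would rerun the induction of the first part while carefully tracking constants on $|t-s|\le C_T|\log\eps|$. The base Gr\"onwall bound gives $\|F\|\le e^{K_0^h|t-s|}\le\eps^{-K_0^h C_T}$, and at each higher-order step Duhamel integration over an interval of length $C_T|\log\eps|$ contributes one extra factor $C_T|\log\eps|$, producing the $(2C_T)^k|\log\eps|^k$ prefactor after $k$ steps. The main obstacle I foresee is arguing why the exponent of $\eps$ stays pinned at $-2K_0^h C_T$ independently of $k$: this requires balancing, in each Duhamel step, the homogeneous propagator bound $e^{K_0^h(t-\tau)}$ against the single factor of $F$ entering the source (through $\partial_z[\mathrm{Hess}\,h\circ\kt]$) and exploiting the symplectic identity $\|F^{-1}\|=\|F\|$ to avoid accumulating a further exponential at every iteration, so that only a \emph{doubled} but $k$-independent rate $2K_0^h$ is paid, with the remaining dependence on $k$ polynomial in $|\log\eps|$.
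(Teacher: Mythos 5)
Your treatment of the first two assertions is correct and essentially the paper's: the forward direction is the same Gronwall/induction argument on the linearised equation $\dt F^{\kt}=J\,\mathrm{Hess}_{(x,\xi)}h(t,\kt)\,F^{\kt}$, and for the converse you build the generator via the Poincar\'e lemma and control $\mathrm{Hess}\,h=-J\partial_z V$ through the inverse flow, whereas the paper (which assumes the flow is already Hamiltonian) simply solves the linearised equation for the Hessian, $J\bigl(\dt F^{\kt}\bigr)J\bigl(F^{\kt}\bigr)^\dagger J=\mathrm{Hess}_{(x,\xi)}h\circ\kt$, and reads off subquadraticity from $F^{\kt},\dt F^{\kt}\in S[0;2d]$. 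Your version is slightly more constructive but equivalent in substance.

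The gap is in the Ehrenfest part. You correctly sense that the delicate point is why the power of $\eps$ should not deteriorate as $k$ grows, but the mechanism you propose --- balancing the Duhamel propagator against the source and invoking $\|F^{-1}\|=\|F\|$ for symplectic matrices --- is not an argument. The obstruction sits in the Fa\`a di Bruno expansion of $\partial^{\alpha-\beta}_{(q,p)}\bigl[\mathrm{Hess}_{(x,\xi)}h(\tau,\kappa^{(\tau,s)})\bigr]$: its worst terms carry up to $|\alpha-\beta|$ extra factors of $F^{\kappa^{(\tau,s)}}$, each controlled only by $e^{K_0^h|\tau-s|}$, and none of them is paired with an inverse Jacobian that the norm identity could exploit; so your scheme, run naively, produces $e^{cK_0^h|t-s|}$ with $c$ increasing in $k$, not the claimed $k$-independent rate. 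The paper does no such balancing: its induction yields $\|\partial^\alpha_{(q,p)}F^{\kt}\|\leq C_k(2T)^k e^{K_0^h|t-s|}$ with $C_k$ depending on $K_l^h$, $l\leq k$ --- constants which are uniform in $T$ precisely because of the additional hypothesis $\|\partial^\alpha_{(x,\xi)}h\|_{L^\infty(\Rbb\times\Rbb^{2d})}<\infty$ for $2\leq|\alpha|\leq n_0+2$ --- and then simply substitutes $T(\eps)=C_T\log\eps^{-1}$, which is where the prefactor $(2C_T)^k|\log\eps|^k$ and the factor $\eps^{-2K_0^hC_T}$ come from. To finish your write-up you should either carry out that induction explicitly, making precise how the composition terms are absorbed into $C_k$, or settle for an exponent $\eps^{-c(k)K_0^hC_T}$ with $c(k)$ depending on $k$; since $k\leq n_0$ is fixed and the exponent is made arbitrarily small by shrinking $C_T$, this weaker form still suffices for Corollary~\ref{coro:ehrenfest}. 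In either case the symplectic-norm trick should be dropped, as it does not deliver the claimed bound.
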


\begin{proof}
The basic identity follows by differentiating~\eqref{eq:HSt} with respect to $(q,p)$:
\begin{equation}\label{eq:dtF} 
\dt F^{\kappa^{(t,s)}}(q,p)=J\textrm{Hess}_{(x,\xi)}h(t,\kappa^{(t,s)}(q,p))F^{\kappa^{(t,s)}}(q,p).
\end{equation}
The fundamental theorem of calculus gives
\begin{align*}
\left\|F^{\kappa^{(t,s)}}(q,p)\right\|\leq&\left|\int\limits_{s}^{t}\left\|\textrm{Hess}_{(x,\xi)}h(\tau,\kappa^{(\tau,s)}(q,p))\right\|\: \left\|F^{\kappa^{(\tau,s)}}(q,p)\right\|\;d\tau\right|+\|\id\|\\
\leq& K_0^h(T)\left|\int\limits_{s}^{t} \left\|F^{\kappa^{(\tau,s)}}(q,p)\right\|\;d\tau\right|+1.
\end{align*}

Hence, the bound for the Jacobian follows from Gronwall's Lemma:
$$
\left\|F^{\kappa^{(t,s)}}(q,p)\right\|\leq e^{K_0^h(T) |t-s|}.
$$
For the derivatives we have
\begin{align*}
&\left\|\partial^{\alpha}_{(q,p)}F^{\kappa^{(t,s)}}(q,p)\right\|
\leq
\left|\int\limits_{s}^{t}K_0^h(T) \left\|\partial^{\alpha}_{(q,p)}F^{\kappa^{(\tau,s)}}(q,p)\right\|\;d\tau\right|\\
&\qquad+\left|\int\limits_{s}^{t}
\sum_{\beta<\alpha}\binom{\alpha}{\beta}
\left\|\partial_{(q,p)}^{\alpha-\beta}\left[\textrm{Hess}_{(x,\xi)}h(\tau,\kappa^{(\tau,s)}(q,p))\right]
\partial_{(q,p)}^{\beta}F^{\kappa^{(t,s)}}(q,p)\right\|
\;d\tau\right|,
\end{align*}
so Gronwall's Lemma provides inductively
$$
\left\|\partial^{\alpha}_{(q,p)}F^{\kappa^{(t,s)}}(q,p)\right\|\leq C_k (2T)^ke^{K_0^h(T) |t-s|},
$$
where $C_k$ depends on $K_l^h(T)$ for $l\leq k$.
The result for the Ehrenfest timescale follows by substituting $T(\eps)=C_T\log(\eps^{-1})$ into this expression.

Now consider a Hamiltonian flow of class $\Bcal$. The identity~\eqref{eq:dtF} gives
\begin{align*}
J \left(\dt F^{\kappa^{(t,s)}}(q,p)\right)J\left(F^{\kappa^{(t,s)}}(q,p)\right)^\dagger J=\textrm{Hess}_{(q,p)}h(t,\kappa^{(t,s)}(q,p))
\end{align*}
Hence, $h$ is subquadratic, as $\dt F^{\kappa^{(t,s)}}$ is of class $S[0;2d]$ by definition.
\end{proof}

\begin{remark}\hspace*{\fill}\label{rem:ehrenfest}
\begin{enumerate}
\item It is easy to show that there is an equivalence between linear Hamiltonian flows and quadratic Hamiltonians. Every quadratic Hamiltonian generates a linear flow and every Hamiltonian linear flow
$\kt(q,p)=M(t,s)\binom{q}{p}$, $M(\cdot,s)\in C^1(\Rbb,\mathrm{Sp}(2d))$
is generated by a quadratic Hamiltonian, namely
$$
h_M(t,q,p)=\frac{1}{2}(q\: \:p)\left(J\left(\dt M(t,s)\right)JM^\dagger(t,s)J\right)
\binom{q}{p}.
$$

\item By estimating the logarithm, we can have a bound of the form
\begin{equation}\label{eq:bound_ehrenfest}
\sup_{|t-s|<T(\eps)}M_k^0\left[F^\kt\right]\leq C'_k(C_T)\eps^{-\rho(C_T)},
\end{equation}
for the Ehrenfest timescale, where $\rho(C_T)<\rho_0$ for any $\rho_0>0$, if $C_T$ is chosen small enough.

\item From now on, all considered canonical transformations are assumed to be of class $\Bcal$. 
\end{enumerate}
\end{remark}

An important quantity associated with a canonical transformation is the so-called action.
\begin{definition}[Action]
Let $\kappa(q,p)=(X^\kappa(q,p),\Xi^\kappa(q,p))$ be a canonical transformation of $T^\ast\Rbb^d$. A real-valued function $S^\kappa$ is called an \textbf{action associated to $\kappa$} if it fulfills
\begin{equation} \label{eq:action}
S^\kappa_q(q,p)=-p+X^\kappa_q(q,p)\Xi^\kappa(q,p) ,\quad S^\kappa_p(q,p)=X^\kappa_p(q,p)\Xi^\kappa(q,p).
\end{equation}
\end{definition}
\begin{remark}\hfill
\begin{enumerate}
\item An action associated to a canonical transformation is only defined up to an additive constant.
If we consider a time-dependent family of canonical transformations $\kappa^t$, we will choose this time-dependent constant such that $S^{\kappa^t}(q,p)$ is $C^1$ with respect to time.

\item If $\kappa^{(t,s)}$ is induced by a Hamiltonian $h(t,x,\xi)$, the action of classical mechanics
$$
S_{cl}^{\kappa^{(t,s)}}(q,p)=\int\limits_s^t\left( \dt X^{\kappa^\tau}(q,p)\cdot \Xi^{\kappa^\tau}(q,p) -h\left(\tau,\kappa^{(\tau,s)}(q,p)\right)\right)\;d\tau
$$
is an action in the sense of this definition. In this case, we use the convention $S^{\kappa^{(s,s)}}(q,p)=0$, where $S^\kt(q,p)$ is now considered as a function of $t$. We cannot assume $S^\id(q,p)=0$, as the case $h(t,x,\xi)=h(t)$ shows.
\end{enumerate}
\end{remark}
\subsection{Definition of FIOs and continuity results}
In this section, we will define the operators we will use to approximate of the propagator of the Schr\"odinger equation.

\begin{definition}[Fourier Integral Operator]\label{defi:FIO}
For $u\in S[(+\infty,m^p);(3d,d)]$, a Schwartz-class function $\varphi\in\Scal(\Rbb^d,\Cbb)$ and $n>m^p+d$, we define
\begin{align}
&\left[\Ical^\eps(\kappa;u;\Theta^x,\Theta^y)\varphi\right](x):=\label{eq:defi_FIO}\\
&\qquad\frac{1}{(2\pi\eps)^{3d/2}}\int_{\Rbb^{3d}}e^{\frac{i}{\eps}\Phi^\kappa(x,y,q,p;\Theta^x,\Theta^y)}(L_y^\dagger)^n[u(x,y,q,p)\varphi(y)]\;dq\, dp\, dy\nonumber,
\end{align}
where
\begin{itemize}
\item $\Theta^x$ and $\Theta^y$ are complex symmetric matrices (i.e. $\Theta=\Theta^\dagger$) with positive definite real part,
\item the complex-valued phase-function is given by
\begin{align*}
\Phi^\kappa(x,y,q,p;\Theta^x,\Theta^y)&=S^\kappa(q,p)-p\cdot (y-q) + \Xi^\kappa(q,p)\cdot(x-X^\kappa(q,p))\\
&\quad+\frac{i}{2}(y-q)\cdot \Theta^y (y-q)\\
&\quad+\frac{i}{2} (x-X^\kappa(q,p))\cdot \Theta^x (x-X^\kappa(q,p))
\end{align*}
\item and the differential operator $L_y$ is defined by
$$
L_y=\frac{1}{1+|\nabla_y\Phi^\kappa(x,y,q,p;\Theta^x,\Theta^y)|^2}\left[1-i\eps\nabla_y\overline{\Phi^\kappa(x,y,q,p;\Theta^x,\Theta^y)}\cdot\nabla_y\right].$$
\end{itemize}
\end{definition}
The operator $L^\dagger_y$, the Banach-space adjoint of $L_y$, i.e.
$$
\int_{\Rbb^d}v(y)[L_y^\dagger u](y)\;dy=\int_{\Rbb^d}[L_yv](y)u(y)\; dy,
$$
makes the integrals absolutely convergent. Moreover, it is constructed such that expression~\eqref{eq:defi_FIO} coincides for different $n$ with $n>m^p+d$ (especially with $n=0$, if $m^p<-d$).

The following theorem combines the central results of~\cite{[RousseSwart]}.
\begin{theorem} \label{theo:FIO}\hfill
\begin{enumerate}
\item If $u\in S[+\infty;4d]$, $\Ical^\eps(\kappa;u;\Theta^x,\Theta^y)$ sends $\Scal(\Rbb^d,\Cbb)$ into itself and is continuous.
\item If $u\in S[0;4d]$, $\Ical^\eps(\kappa;u;\Theta^x,\Theta^y)$ can be extended in a unique way to a linear bounded operator $L^2(\Rbb^d,\Cbb)\to L^2(\Rbb^d,\Cbb)$ and there exists a constant $C(M_0^\kappa,\Theta^x,\Theta^y)$ such that
\begin{equation} \label{eq:full}
\left\| \Ical^\eps(\kappa;u;\Theta^x,\Theta^y)\right\|_{L^2\to L^2}\leq C(M_0^\kappa;\Theta^x,\Theta^y) \sum_{|\alpha|\leq 4d+1}\|\partial^\alpha_{(x,y)}u\|_{L^\infty}.
\end{equation}
In the special case where $u\in S[0;2d]$ is independent of $(x,y)$, we have
\begin{equation} \label{eq:corfull}
\left\| \Ical^\eps(\kappa;u;\Theta^x,\Theta^y)\right\|_{L^2\to L^2}\leq 2^{-d/2}\det\left(\Re\Theta^x\Re\Theta^y\right)^{-\frac14}\Vert u\Vert_{L^\infty}.
\end{equation}
\end{enumerate}
\end{theorem}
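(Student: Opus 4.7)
The plan is to prove the two assertions in turn, with the sharp bound~\eqref{eq:corfull} serving as the structural core from which the general bound~\eqref{eq:full} follows by a Taylor-remainder reduction. For Part 1, I would first verify that~\eqref{eq:defi_FIO} is well defined. A direct computation gives $L_y e^{i\Phi^\kappa/\eps}=e^{i\Phi^\kappa/\eps}$, so $n$-fold integration by parts transfers $L_y^\dagger$ onto $u\phi$ without changing the value of the integral. Since $\nabla_y\Phi^\kappa=-p+i\Theta^y(y-q)$, the factor $(1+|\nabla_y\Phi^\kappa|^2)^{-n}$ produces decay of order $\langle p\rangle^{-n}\langle y-q\rangle^{-n}$, so the hypothesis $n>m^p+d$ ensures absolute convergence in $(y,p)$, while the Gaussian weight in $(x-X^\kappa)$ localizes the $(q,p)$-integration via the diffeomorphism $\kappa$; independence on $n$ follows from the same integration by parts. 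To prove continuity $\Scal\to\Scal$, I would differentiate under the integral sign: each $\partial_x$ brings down $\frac{i}{\eps}[\Xi^\kappa+i\Theta^x(x-X^\kappa)]$, whose polynomial growth is absorbed by the Gaussian weights, and the Schwartz decay of $\phi(y)$ combined with the $(y-q)$-Gaussian yields rapid decay in $q$, controlling all seminorms.

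For the sharp bound~\eqref{eq:corfull}, when $u=u(q,p)$, I would factor $\Ical^\eps$ through FBI-type transforms. Defining
\begin{equation*}
(T^\eps_\Theta\phi)(q,p):=\frac{1}{(2\pi\eps)^{d/2}}\int_{\Rbb^d}e^{i(p\cdot(q-y)+\tfrac{i}{2}(y-q)^\dagger\Theta(y-q))/\eps}\phi(y)\,dy,
\end{equation*}
Plancherel in $p$ combined with a Gaussian integration in $q$ gives $\|T^\eps_\Theta\|_{L^2\to L^2}=(\pi\eps)^{d/4}(\det\Re\Theta)^{-1/4}$. Computing the $y$-integral in~\eqref{eq:defi_FIO} explicitly and changing variables by $\kappa$ (Jacobian $1$ by symplecticity) exhibits the factorization
\begin{equation*}
\Ical^\eps(\kappa;u;\Theta^x,\Theta^y)=(2\pi\eps)^{-d/2}\,(T^\eps_{\overline{\Theta^x}})^\ast\circ U_\kappa\circ M_{e^{iS^\kappa/\eps}u}\circ T^\eps_{\Theta^y},
\end{equation*}
where $M$ denotes pointwise multiplication (operator norm $\|u\|_{L^\infty}$) and $U_\kappa\psi:=\psi\circ\kappa^{-1}$ is unitary on $L^2(\Rbb^{2d})$. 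Multiplying the operator norms together, using $\Re\overline{\Theta^x}=\Re\Theta^x$ and $(2\pi\eps)^{-d/2}(\pi\eps)^{d/2}=2^{-d/2}$, yields~\eqref{eq:corfull}.

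For the general bound~\eqref{eq:full} with $u=u(x,y,q,p)$, the above factorization no longer applies directly. My approach would be to Taylor-expand $u$ in $(x,y)$ around $(X^\kappa(q,p),q)$: the leading term $u(X^\kappa(q,p),q,q,p)$ is handled by~\eqref{eq:corfull}, while each remainder carries factors $(x-X^\kappa)^\alpha(y-q)^\beta$ that are absorbed by the Gaussian weights, contributing $\eps^{(|\alpha|+|\beta|)/2}$ to the $L^2$-norm. Iterating to total order $4d+1$ and estimating the residual by Schur's test on the kernel obtained after the $(q,p)$-integration yields the bound depending only on $\sum_{|\alpha|\leq 4d+1}\|\partial^\alpha_{(x,y)}u\|_{L^\infty}$. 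The main obstacle I anticipate is this final Schur-test step: after Taylor expansion has extracted the $(x,y)$-seminorms, the residual kernel must be polynomially integrable in both variables uniformly in $\eps$, which requires the expansion order to be chosen so that the combination of Gaussian weights with a non-stationary-phase integration by parts in $(q,p)$ delivers $\langle x-y\rangle^{-(d+1)}$ decay, with all higher seminorms of $\kappa$ absorbed into $C(M_0^\kappa;\Theta^x,\Theta^y)$.
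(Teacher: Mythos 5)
The paper states Theorem~\ref{theo:FIO} without proof, referring to~\cite{[RousseSwart]}, so no direct comparison to the paper's own argument is possible; the FBI/wave-packet factorization you propose is a standard tool for this kind of result and is plausibly close in spirit to that reference.

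Your derivation of~\eqref{eq:corfull} is correct. The verification $L_ye^{i\Phi^\kappa/\eps}=e^{i\Phi^\kappa/\eps}$ checks out, $\|T^\eps_\Theta\|_{L^2\to L^2}=(\pi\eps)^{d/4}(\det\Re\Theta)^{-1/4}$ follows by Plancherel in $p$ and a Gaussian integral in $q$, and the factorization through $U_\kappa$ (after the change of variables $\kappa(Q,P)=(q,p)$, Jacobian one by symplecticity) is an exact kernel identity; multiplying the norms, with $U_\kappa$ unitary and $|e^{iS^\kappa/\eps}|=1$, gives $2^{-d/2}\det(\Re\Theta^x\Re\Theta^y)^{-1/4}\|u\|_\infty$. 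The one point you should pin down is that the naive iterated integral implicit in your factorization is not absolutely convergent in $p$ for general $u\in S[0;2d]$ (the level sets $\{X^\kappa(q,p)=x\}$ are $d$-dimensional, so the $x$-Gaussian alone does not control $p$), so you must check agreement with the $L_y^\dagger$-regularized Definition~\ref{defi:FIO}, e.g.\ via approximation by compactly supported symbols and passage to the limit.

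For~\eqref{eq:full} the reduction to~\eqref{eq:corfull} by Taylor expansion in $(x,y)$ around $(X^\kappa(q,p),q)$ is a sound idea, and the leading coefficients do become $(x,y)$-independent symbols whose sup-norms are exactly the $\|\partial^\alpha_{(x,y)}u\|_\infty$ appearing in the stated bound. But the treatment of the remainder has a genuine gap. The integral-form Taylor remainder is still $(x,y)$-dependent, so~\eqref{eq:corfull} does not apply to it, and the Schur estimate you sketch does not close as stated: after integrating $y$ against the Gaussian, the leftover $(q,p)$-integral $\int e^{-c|x-X^\kappa(q,p)|^2/\eps}\,dq\,dp$ diverges (already for $\kappa=\mathrm{id}$ the $p$-integral is unconstrained). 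You flag the need for non-stationary phase in $(q,p)$, but that is precisely the $L_y^\dagger$ machinery again, and the $\eps$-bookkeeping is the crux: the non-stationary gain and the Gaussian absorption both involve powers of $\eps$ and cost derivatives of $u$, and you must show the total stays $O(1)$ uniformly in $\eps$ while consuming at most $4d+1$ derivatives in $(x,y)$. As written, the proposal asserts that an adequate expansion order exists rather than exhibiting the balance; this is where I would want to see either a concrete kernel estimate making the Schur test work or, more robustly, a Cotlar--Stein argument over a unit-scale partition of $(q,p)$-space, before accepting~\eqref{eq:full}.
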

\begin{remark}\label{remark:identity}\hfill
\begin{enumerate}
\item The dependence of $C(M_0^\kappa;\Theta^x,\Theta^y)$ on $M_0^\kappa,\Theta^x$ and $\Theta^y$ can be made more explicit. Consider~{\rm \cite{[RousseSwart]}} for the precise expression.
\item There is an analogous result for Weyl-quantised pseudodifferential operators
$$
(op^\eps(h)\psi)(x):=\frac{1}{(2\pi\eps)^d}\int_{T^\ast\Rbb}e^{\frac{i}{\eps}\xi\cdot(x-y)}
h\left(\frac{x+y}{2},\xi\right)\psi(y)\:dy\:d\xi,
$$
see for example~{\rm\cite{[Martinez]}}:
\begin{enumerate}
\item If $h\in S[+\infty;2d]$,  $op^{\eps}(h)$ sends $\Scal(\Rbb^d,\Cbb)$ into itself and is continuous.
\item If $h\in S[0;2d]$, $op^{\eps}(h)$ can be extended in a unique way to a linear bounded operator $L^2(\Rbb^d,\Cbb)\to L^2(\Rbb^d,\Cbb)$ with $\eps$-independent norm
$$
\left\|op^\eps(h)\right\|_{L^2\to L^2}\leq C\sum_{|\alpha|\leq 2d+1}\|\partial^\alpha_{(x,\xi)}h\|_{L^\infty}.
$$
\end{enumerate}
The second part is the famous Calder\'on-Vaillancourt Theorem.
\item We have $\Ical^\eps\left(\id;\det(\Theta^x+\Theta^y)^{\frac12};\Theta^x,\Theta^y\right)=\id$,
compare the appendix for the correct choice of the square root.
\end{enumerate}
\end{remark}

\section{Composition with PDOs and time-derivatives}
The standard approach in the field of asymptotic analysis consists in a two step procedure. First, one constructs an {\em asymptotic solution} $U^\eps_{N}(t,s)\psi^\eps_s$ of order $O(\eps^{N+1})$, i.e. a function which fulfills
\begin{equation}
\left(i\eps\dt -H^\eps(t)\right)U^\eps_{N}(t,s)\psi^\eps_s=\eps^{N+1} R_N^\eps(t,s)\psi^\eps_s.\label{eq:asympt_solution}
\end{equation}
If one can establish an $\eps$-independent bound on the remainder $R_N^\eps(t,s)$, the asymptotic solution can be turned into an approximate solution of the unitary group with help of a special version of Gronwall's Lemma, (see for example Lemma 2.8 in~\cite{[Hagedorn1]} for the strategy of the proof):
\begin{lemma}\label{lemma:gronwall}
Let $U^\eps(t,s)$ be the propagator of the time-dependent Schr\"odinger-equation
$$
\left(i\eps\dt-H^\eps(t)\right)\psi^\eps(t)=0,\qquad\psi^\eps(s)=\psi^\eps_s\in D\subset L^2(\Rbb^d,\Cbb)
$$
for some family of self-adjoint operators $H^\eps(t)$ with common domain $D$.
Moreover, for some $T>0$ and $-T\leq t,s \leq T$ let $U_{N}^\eps(t,s)$ be a family of bounded operators, which is strongly differentiable with respect to $t$, leaves the domain of $H^\eps(t)$ invariant and which fulfills
$$
i\eps\dt U_{N}^\eps(t,s)\psi^\eps(s)-H^\eps(t)U_{N}^\eps(t,s)\psi^\eps(s)=\eps^{N+1}R_N^\eps(t,s)\psi^\eps(s)
$$
with $U_{N}^\eps(s,s)=\id$. If $\|R_N^\eps(t,s)\|_{L^2\to L^2}< \infty$ for all $-T\leq t,s\leq T$, we have
$$
\left\|U_{N}^\eps(t,s)-U^\eps(t,s)\right\|_{L^2\to L^2}\leq\eps^{N}\left|\int\limits_{s}^{t}\|R_N^\eps(\tau,s)\|_{L^2\to L^2}\;d\tau\right|.
$$
\end{lemma}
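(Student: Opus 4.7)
The plan is a Duhamel-style comparison between the exact and approximate propagators, applied first to a fixed $\psi \in D$ and then extended by density to a bound in operator norm. The object to study is the auxiliary map
\begin{equation*}
f(\tau) := U^\eps(t,\tau)\, U_N^\eps(\tau,s)\psi,
\end{equation*}
defined on the closed interval with endpoints $s$ and $t$; it interpolates between $f(s) = U^\eps(t,s)\psi$ and $f(t) = U_N^\eps(t,s)\psi$. The idea is to compute $f'(\tau)$, observe that the defining equation of $U_N^\eps$ forces an explicit factor of $\eps^N$ to appear, integrate in $\tau$, and take norms.

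The first step is to justify a strong product rule for $f$. By hypothesis the curve $\tau \mapsto U_N^\eps(\tau,s)\psi$ is strongly $C^1$ and stays inside the common domain $D$ of the family $H^\eps(\tau)$. On the other side, self-adjointness of $H^\eps(\tau)$ together with $U^\eps(t,\tau) = U^\eps(\tau,t)^\ast$, the forward Schr\"odinger equation, and the group property $U^\eps(t,\tau)U^\eps(\tau,s) = U^\eps(t,s)$ yield the standard identity
\begin{equation*}
\partial_\tau U^\eps(t,\tau)\chi = -\frac{1}{i\eps}\, U^\eps(t,\tau)\, H^\eps(\tau)\chi \qquad (\chi \in D).
\end{equation*}
Combining the two derivatives and then substituting the equation assumed for $U_N^\eps$ produces
\begin{equation*}
f'(\tau) = \frac{1}{i\eps}\, U^\eps(t,\tau)\bigl[\, i\eps\,\partial_\tau U_N^\eps(\tau,s) - H^\eps(\tau)\,U_N^\eps(\tau,s)\bigr]\psi = -i\,\eps^{N}\, U^\eps(t,\tau)\, R_N^\eps(\tau,s)\psi.
\end{equation*}

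Integrating from $s$ to $t$ delivers the Duhamel representation
\begin{equation*}
U_N^\eps(t,s)\psi - U^\eps(t,s)\psi = -i\,\eps^N \int_s^t U^\eps(t,\tau)\, R_N^\eps(\tau,s)\psi\, d\tau,
\end{equation*}
and, since $U^\eps(t,\tau)$ is an isometry, estimating the integrand in $L^2$ gives the claimed bound on $\psi \in D$; density of $D$ in $L^2$ and taking the supremum over unit vectors upgrades this to the operator-norm statement.

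Despite the lemma's name, no genuine Gronwall inequality is required: unitarity of the exact propagator prevents any exponential amplification of the remainder, so the bound follows directly from the Duhamel formula rather than from an integral-inequality argument. I therefore expect no real obstacle beyond careful bookkeeping of the strong product rule above, and this bookkeeping is exactly what the two hypotheses built into the statement (invariance of $D$ under $U_N^\eps$ and strong differentiability in time) are designed to guarantee.
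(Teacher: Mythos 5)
Your proof is correct and is precisely the Duhamel comparison that the paper alludes to by citing Lemma~2.8 of Hagedorn for the proof strategy; the paper itself does not spell out the argument. Your closing remark is also apt: despite the lemma's label, unitarity of $U^\eps(t,\tau)$ makes the estimate a direct consequence of the integral identity, with no iterative Gronwall step needed.
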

In this section, we state the intermediate results needed for the construction of the asymptotic solution.

In Proposition~\ref{prop:comp_PDO}, we show using Weyl-quantisation that the composition of differential operators with Fourier Integral Operators is again an FIO. Moreover, we give an asymptotic expansion of the symbol of the new FIO, whose terms but for the last are $x$-independent. This is important, as $x$-dependence of the symbol may be converted to $\eps$-dependence, which can be seen from Lemma~\ref{lemma:part_int}. Proposition~\ref{prop:comp_time} deals with the time-derivative of a family of FIOs. Finally, we will establish an uniqueness result for symbols and canonical transformations in Proposition~\ref{prop:uniqueness}.

\subsection{Statement of intermediate results}\label{section:composition}
To state our results, we need the matrix~$\Zcal=Z^\kappa_z(q,p)\left(\Theta^x\right)^{\frac12}$, which already appeared as $\mathcal{Z}\left(F^\kappa(q,p)\right)$ in the statement of our main result in the introduction. We justify this abuse of notation by better readability of the formulae presented here. The invertibility of $\Zcal$, which is implicitly claimed in the following statements, is shown in Lemma~\ref{lemma:Z_inv}.

The composition result reads:
\begin{proposition}\label{prop:comp_PDO}
Let $h\in S[m_h;2d]$ be polynomial in $\xi$ and $u\in S[m_u;2d]$. Then we have
$$
op^\eps(h)\Ical^\eps(\kappa;u;\Theta^x,\Theta^y)=
\Ical^\eps\left(\kappa;\sum\limits_{n=0}^{N}\eps^n v_n;\Theta^x,\Theta^y\right)
+\eps^{N+1}\Ical^\eps\left(\kappa; v^\eps_{N+1};\Theta^x,\Theta^y\right)
$$
as operators from $\Scal(\Rbb^d,\Cbb)$ to $\Scal(\Rbb^d,\Cbb)$.

$v_{n}\in S[m_u+m_h;2d], n\leq N$ and $v^\eps_{N+1}\in S[(m_h,m_u+m_h);(d,2d)]$ are given by
\begin{align*}
v_n(q,p)&=L_n[h;\kappa;\Theta^x,\Theta^y]u(q,p)\\
v^\eps_{N+1}(x,q,p)&=L^\eps_{N+1}[h;\kappa;\Theta^x,\Theta^y]u(q,p),
\end{align*}
where $L_n[h;\kappa;\Theta^x,\Theta^y]$ and $L^\eps_{N+1}[h;\kappa;\Theta^x,\Theta^y]$ are linear differential operators of order $n$ and $N+1$ in $(q,p)$. 
The coefficients of the $L_n[h;\kappa;\Theta^x,\Theta^y]$ are rational functions, with a numerator depending on derivatives of $h$ from order $n$ to $2n$ and derivatives of $F^\kappa(q,p)$ of order $\leq n$, and a denominator of the form $\det(\Zcal)^m$ for some $m>0$. $L^\eps_N[h;\kappa;\Theta^x,\Theta^y]$ is of the same form depending on derivatives of $h$ from order $N$ to $2N+1$ and derivatives of $F^\kappa(q,p)$ of order $\leq 2N+1$.

The explicit expressions for $v_0,v_1$ and $v_2$ are
\begin{align}
v_0(q,p)&=u(q,p)(h\circ\kappa)(q,p) \label{eq:v0_PDO}\\
\nonumber\\
v_1(q,p)\label{eq:v1_PDO}
&=-\div\left(\left((h_{x}+i\Theta^xh_{\xi})\circ\kappa(q,p)\right)^\dagger\Zcalinv u(q,p)\right)\\
&\quad+u(q,p)\frac{1}{2}\tr\left(\Zcalinv \partial_{z}((h_x+i\Theta^xh_\xi)\circ\kappa(q,p))\right)\nonumber\\
\nonumber\\
v_2(q,p)&=L_2[h_{\geq 3};\kappa;\Theta^x,\Theta^y]u(q,p)\\
+\frac{1}{2}\sum\limits_{k=1}^{d}&\;\div\left(
u(q,p)\partial_{z_k}\left[((\partial_x+i\Theta^x\partial_\xi)^2h)\circ\kappa(q,p)\Zcalinv e_k)\right]^\dagger \Zcalinv
\right)\nonumber
\end{align}
The linear partial differential operator $L_2[h_{\geq 3};\kappa;\Theta^x,\Theta^y]$ depends on derivatives of $h$ of order $3$ and $4$.
\end{proposition}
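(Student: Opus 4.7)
My plan is to exploit the explicit integral representations of both $op^\eps(h)$ and $\Ical^\eps$, combine them, and identify the asymptotic expansion in $\eps$ by a Weyl/WKB calculus combined with integration by parts in $(q,p)$. My three-step strategy is: (a) apply $op^\eps(h)$ to the Gaussian WKB state $e^{i\Phi^\kappa/\eps}$ using standard Weyl symbol calculus; (b) Taylor-expand the resulting $x$-dependent amplitude around $x=X^\kappa(q,p)$; and (c) convert the residual $(x-X^\kappa)$-factors into $\eps$-corrections by integration by parts in $(q,p)$.

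For step (a), since $h$ is polynomial in $\xi$, $op^\eps(h)$ is a finite-order differential operator in $x$ which commutes with the integration over $(y,q,p)$ in the FIO kernel. I would use the standard WKB-action formula
\begin{equation*}
op^\eps(h)\,e^{\frac{i}{\eps}\Phi^\kappa} \;=\; e^{\frac{i}{\eps}\Phi^\kappa}\Bigl[\sum_{n=0}^{N}\eps^{n}T_n(h,\Phi^\kappa) + \eps^{N+1}R^\eps_{N+1}\Bigr],
\end{equation*}
whose leading term is $T_0 = h(x,\nabla_x\Phi^\kappa) = h\bigl(x,\Xi^\kappa+i\Theta^x(x-X^\kappa)\bigr)$, and whose higher $T_n$ depend only on partial derivatives of $h$ at that point contracted with the constant $x$-Hessian $i\Theta^x$ of $\Phi^\kappa$. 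The Weyl symmetrisation is what will produce the factor $\tfrac12$ appearing in the trace piece of $v_1$.

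Step (b) is immediate: the constant term in the Taylor expansion of $T_0$ is $h\circ\kappa$, giving $v_0 = u\,(h\circ\kappa)$, while linear and higher terms retain factors of $(x-X^\kappa)^\alpha$. The key step (c) relies on the phase identity
$$\partial_z\Phi^\kappa \;=\; \bigl(\Xi^\kappa_z - iX^\kappa_z\Theta^x\bigr)(x-X^\kappa),$$
which I would verify by direct computation using the action identities~\eqref{eq:action}; after a $(\Theta^x)^{\pm 1/2}$-similarity the matrix on the right becomes exactly $\Zcal$, which is invertible by Lemma~\ref{lemma:Z_inv}. Consequently $(x-X^\kappa)\,e^{i\Phi^\kappa/\eps}$ is proportional to $\eps\,\Zcalinv\partial_z(e^{i\Phi^\kappa/\eps})$, and integration by parts along $\partial_z = (\Theta^y)^{-1}\partial_q - i\partial_p$ trades each factor $(x-X^\kappa)$ for one power of $\eps$, a $\partial_z$-derivative of the symbol, and a Jacobian correction coming from $\partial_z\Zcalinv$. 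Applied to the linear Taylor term of $T_0$ this produces exactly the divergence piece $-\div\bigl(((h_x+i\Theta^xh_\xi)\circ\kappa)^\dagger\Zcalinv u\bigr)$ of $v_1$, and the $\tfrac12\tr$ piece will emerge from combining the $T_1$ Weyl correction with the Jacobian contribution. Iterating the same procedure will produce $v_2$ and all higher $v_n$, with the Taylor remainder of $h$ yielding $\eps^{N+1}v_{N+1}^\eps$.

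The hard part will be the combinatorial bookkeeping beyond $n=1$: each $\partial_z$-integration by parts differentiates both $\Zcalinv$ and the Jacobian of $\kappa$, generating nested derivatives that have to be reassembled into the stated closed form with denominator $\det(\Zcal)^m$. Verifying that this structure persists to all orders, and that the Taylor remainder $v_{N+1}^\eps$ indeed belongs to $S[(m_h,m_u+m_h);(d,2d)]$ using only the $\mathcal{B}$-class bounds on $\kappa$, is where the real technical effort lies. The explicit formulas stated for $v_1$ and $v_2$ will serve as consistency checks on this bookkeeping.
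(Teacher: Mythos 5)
Your plan is correct in outline and follows a genuinely different route from the paper. The paper keeps the full five-fold oscillatory integral and Taylor-expands $h\left(\frac{x+w}{2},\xi\right)$ around $\kappa(q,p)$ in holomorphic/antiholomorphic coordinates $(a-Z^\kappa)^\alpha(\bar a-\bar Z^\kappa)^\beta$; it then removes the $(a-Z^\kappa)$ factors by integration by parts in $(w,\xi)$ (using the vector field $\frac12(\Theta^x)^{1/2}\partial_\xi - i(\Theta^x)^{-1/2}\partial_w$, which sends $\Psi^\kappa$ to $a-Z^\kappa$ and annihilates $a-Z^\kappa$), and only then removes the remaining $(\bar a-\bar Z^\kappa)$ factors by a second, \emph{mixed} $(w,\xi,q,p)$ integration by parts that brings in $\Zcalinv\partial_z$. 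By contrast, you first eliminate $(w,\xi)$ entirely by treating $op^\eps(h)$ as a genuine differential operator acting on the Gaussian state, and only afterwards Taylor-expand the resulting amplitude in $(x-X^\kappa)$ and apply Lemma~\ref{lemma:part_int}/Proposition~\ref{prop:x_independent_symbol}. Both are valid; what your approach buys is a more elementary first step that avoids the 5d oscillatory calculus, while what the paper's $(a,\bar a)$ decomposition buys is precisely a systematic way to organise the higher-order combinatorics, via the ``brick-set'' machinery ($\Gamma_1$, $\Gamma_2$, the map $[\cdot]$ and the coefficients $\lambda(\gamma,\gamma_1,\ldots,\gamma_n)$) that gives closed-form expressions for all $L_n$, tracks the rational structure of their coefficients, and delivers the symbol-class and denominator $\det(\Zcal)^m$ claims.

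Two concrete cautions. First, your attribution of the $\frac12\tr$ piece in $v_1$ is essentially right but slightly mislabelled: it arises from the sum of the Weyl correction $T_1=\frac12\tr\left(h_{\xi\xi}\circ\kappa\,\Theta^x\right)-\frac{i}{2}\tr\left(h_{x\xi}\circ\kappa\right)$ and the quadratic Taylor term of $T_0$, the latter producing $\frac12\tr\left(\Zcalinv X^\kappa_z\left((\partial_x+i\Theta^x\partial_\xi)^2 h\right)\circ\kappa\right)u$ via Proposition~\ref{prop:x_independent_symbol}; using $i\Xi^\kappa_z=\Zcal-X^\kappa_z\Theta^x$ one can check these really recombine to the stated expression, but this requires an explicit computation you have not carried out and should not describe as merely a ``Jacobian correction coming from $\partial_z\Zcalinv$''. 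Second, and more substantially, you have identified but not resolved the main technical burden: iterating $(x-X^\kappa)$-removal produces nested applications of $\partial_z$ acting on $\Zcalinv$ and on symbols, and establishing that the result reassembles into operators $L_n$ of the stated order and form, with the claimed dependence on derivatives of $h$ (order $n$ to $2n$) and of $F^\kappa$ (order $\leq n$), and that the remainder $v^\eps_{N+1}$ lies in $S[(m_h,m_u+m_h);(d,2d)]$, is precisely what the paper's brick-set bookkeeping is designed to deliver. Without some substitute for that machinery your argument establishes $v_0$, $v_1$ and plausibly $v_2$ but not the general structural claims of the proposition.
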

\begin{remark}
As the coefficients of the differential operators are rational functions of the form described in the statement, a bound for the elements of $F^\kappa(q,p)$ and their derivatives of the form $C\eps^{-\rho}$ gives a bound for the coefficients of the form $C'\eps^{-M\rho}$ for some $M\in\Nbb$, where $C'$ depends on derivatives of $h$.
\end{remark}

The second result of this section will investigate the time-derivative of a family of FIOs.
In the case of a time-dependent family of canonical transformations, we have the following result:
\begin{proposition}\label{prop:comp_time}
Let $u\in C(\Rbb,S[(m_q,m_p);(d,d)])$ be a family of time-dependent symbols with $u(\cdot,q,p)\in C^1(\Rbb,\Cbb)$ and $(\dt u)(t,\cdot,\cdot)\in S[(m_q,m_p);(d,d)]$, $\kappa^t$ a family of canonical transformations of class $\Bcal$, $S^{\kappa^t}$ an action associated to $\kappa^t$, $\Theta^x\in C^1(\Rbb,{\rm Gl}(d))$ a family of complex symmetric matrices with positive definite real part and $\Theta^y$ complex symmetric with positive definite real part.
We have
$$
i\eps\dt \Ical^\eps\left(\kappa^t;u;\Theta^x(t),\Theta^y\right)=
\Ical^\eps\left(\kappa^t;\sum_{n=0}^2\eps^nv_n;\Theta^x(t),\Theta^y\right)
$$
with
\begin{align}
v_{0}(t,q,p)&=u(t,q,p)\left(-\dt S^{\kappa^t}(q,p)
+\dt X^{\kappa^t}(q,p)\cdot\Xi^{\kappa^t}(q,p)\right)\label{eq:v0_time}\\
v_{1}(t,q,p)
&=i\dt u(t,q,p)\label{eq:v1_time}\\
&\quad+\div\left(\left(\dt \Xi^{\kappa^t}(q,p)-i\Theta^x(t)\dt X^{\kappa^t}(q,p)\right)^\dagger \Zcalinvt u(t,q,p)\right)\nonumber\\
&\quad-\frac{i}{2}u(t,q,p)\tr\left(\Zcalinvt X^\kappa_z(q,p)\dt\Theta^x(t)\right)\nonumber,\\
v_2(t,q,p)&=-\sum\limits_{k=1}^{d}\div\left(\partial_{z_k}\left(\dt\Theta^x(t)\Zcalinvt e_k\,u(q,p)\right)^\dagger \Zcalinvt\right),
\end{align}
where $v_{0},v_1,v_2\in C^0\left(\Rbb,S[(m_q,m_p);(d,d)]\right)$.
\end{proposition}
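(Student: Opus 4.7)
The plan is to compute the time derivative by differentiating under the integral sign. Since $i\eps\,\dt e^{i\Phi^{\kappa^t}/\eps} = -(\dt\Phi^{\kappa^t})\,e^{i\Phi^{\kappa^t}/\eps}$, the Leibniz rule yields
$$i\eps\,\dt \Ical^\eps\bigl(\kappa^t;u;\Theta^x(t),\Theta^y\bigr) = \Ical^\eps\bigl(\kappa^t;\,-u\,\dt\Phi^{\kappa^t} + i\eps\,\dt u;\,\Theta^x(t),\Theta^y\bigr).$$
The $i\eps\,\dt u$ term already accounts for the $i\,\dt u$ piece of $v_1$. The remaining work consists in converting the symbol $-u\,\dt\Phi^{\kappa^t}$, which depends on all of $(x,y,q,p)$, into an equivalent one depending only on $(q,p)$.

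I would first expand $\dt\Phi^{\kappa^t}$ and organise the result by degree in $(x-X^{\kappa^t})$: the degree-zero piece $\dt S^{\kappa^t} - \Xi^{\kappa^t}\cdot \dt X^{\kappa^t}$, the degree-one piece $(\dt\Xi^{\kappa^t} - i\Theta^x(t)\dt X^{\kappa^t})\cdot(x - X^{\kappa^t})$, and the degree-two piece $\tfrac{i}{2}(x - X^{\kappa^t})\cdot\dt\Theta^x(t)(x - X^{\kappa^t})$ arising solely from the time-dependence of $\Theta^x(t)$. No $(y-q)$-term survives since $y$ does not depend on $t$. Multiplied by $-u$, the degree-zero piece is already of the desired form and matches $v_0$ exactly.

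The mechanism for eliminating the remaining $(x - X^{\kappa^t})$-factors is the ``partial integration'' device alluded to in Lemma~\ref{lemma:part_int}. A direct computation of $\partial_z\Phi^{\kappa^t}$ (using $\partial_z = (\Theta^y)^{-1}\partial_q - i\partial_p$) shows that the $(y-q)$-contributions coming from $\partial_q$ and $\partial_p$ cancel, leaving a nondegenerate linear function of $(x - X^{\kappa^t})$ alone, with matrix proportional to $\Zcalt$ (invertibility provided by Lemma~\ref{lemma:Z_inv}). Combined with $\partial_z e^{i\Phi^{\kappa^t}/\eps} = (i/\eps)(\partial_z\Phi^{\kappa^t})\,e^{i\Phi^{\kappa^t}/\eps}$ and integration by parts in $(q,p)$, this lets one trade each factor of $(x - X^{\kappa^t})$ in the symbol for a factor of $\eps$ and an application of $\div(\Zcalinvt\,\cdot\,)$ to what remains. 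One application to the linear piece produces the $\div$-term of $v_1$. One application to the quadratic piece yields, via the Leibniz rule on $\partial_z$, two contributions: when the derivative hits the remaining $(x - X^{\kappa^t})$-factor, the $-\tfrac{i}{2}u\,\tr(\Zcalinvt X^{\kappa^t}_z\dt\Theta^x(t))$ trace term of $v_1$ emerges (via $\div = \tr\,\partial_z$); when it hits the prefactor, a residual $\eps$-weighted linear-in-$(x-X^{\kappa^t})$ symbol remains, and a second application of the identity absorbs this residual into another factor of $\eps$, producing precisely the stated $v_2$.

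Finally, I would verify that $v_0, v_1, v_2 \in C^0\bigl(\Rbb, S[(m_q,m_p);(d,d)]\bigr)$ using the class-$\Bcal$ hypothesis on $\kappa^t$ (uniform bounds on $F^{\kappa^t}$ and $\dt F^{\kappa^t}$), the $C^1$-regularity in $t$ of $\Theta^x(t)$ and $u$, and the uniform control on $\Zcalinvt$ from Lemma~\ref{lemma:Z_inv}. The main obstacle is the bookkeeping of the quadratic term: the two successive integrations by parts must be executed with care, because the first of them does not convert the quadratic cleanly into a pure $\div$ but also produces the trace contribution to $v_1$. Keeping the correct order and transposes of matrices ($X^{\kappa^t}_z$ to the left of $\dt\Theta^x(t)$, $\Zcalinvt$ to the right, and the $e_k$-index structure visible in the expression for $v_2$), the signs and the factors $\tfrac12$ versus $\tfrac{i}{2}$, and the translation from $(q,p)$-derivatives into $\partial_z$ through $\div = \tr\,\partial_z$ is where the calculation is delicate.
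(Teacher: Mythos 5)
Your proposal is correct and follows essentially the same route as the paper: differentiate under the integral sign, expand $\dt\Phi^{\kappa^t}$ as a quadratic polynomial in $x-X^{\kappa^t}(q,p)$, and convert the $x$-dependence into powers of $\eps$ by the integration-by-parts device of Lemma~\ref{lemma:part_int}. The only difference is that the paper packages the two successive integrations by parts as a single citation of Proposition~\ref{prop:x_independent_symbol}, whereas you carry them out explicitly.
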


\begin{remark}
In both propositions, the case of a linear canonical transformation, a quadratic symbol $h$ and a constant symbol $u$ results in $v_n=0$ for $n\geq 2$. This will result in the exactness of the Herman-Kluk expression for quadratic Hamiltonians.
\end{remark}

Finally, we have the following uniqueness result.
\begin{proposition}\label{prop:uniqueness}
Let $\kappa_1$ and $\kappa_2$ be two canonical transformations of class $\Bcal$ and $u,v\in S[0;2d]$. If
$$
\lim_{\eps\to 0}\left\|\Ical^\eps(\kappa_1;u;\Theta^x,\Theta^y)-\Ical^\eps(\kappa_2;v;\Theta^x,\Theta^y)\right\|_{L^2\to L^2}=0,
$$
then $u=v$ and $\kappa_1(q,p)=\kappa_2(q,p)$ for all $(q,p)\in\supp u$.
\end{proposition}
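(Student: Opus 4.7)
The approach is to test the difference of the two operators against coherent states $g^\eps_{(q_0, p_0)}$. These combine unit $L^2$-norm with strong phase-space localization as $\eps \to 0$; since $\|g^\eps_{(q_0,p_0)}\|_{L^2} = 1$, the hypothesis implies $\|\Ical^\eps(\kappa_1;u;\Theta^x,\Theta^y)g^\eps_{(q_0,p_0)} - \Ical^\eps(\kappa_2;v;\Theta^x,\Theta^y)g^\eps_{(q_0,p_0)}\|_{L^2} \to 0$ for every $(q_0, p_0) \in T^\ast\Rbb^d$.

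First I would compute $\Ical^\eps(\kappa;u;\Theta^x,\Theta^y)g^\eps_{(q_0, p_0)}$ explicitly to leading order. The $y$-dependence of the integrand is purely Gaussian with a quadratic form of positive definite real part, so the $y$-integral can be evaluated in closed form. The remaining oscillatory integral over $(q, p)$ has a complex phase whose imaginary part is a positive semidefinite quadratic form vanishing only at $(q, p) = (q_0, p_0)$; a complex Laplace expansion at this nondegenerate critical point should yield an asymptotic of the form
\begin{equation*}
\Ical^\eps(\kappa; u; \Theta^x, \Theta^y)g^\eps_{(q_0, p_0)} = u(q_0, p_0)\, e^{\frac{i}{\eps}S^\kappa(q_0, p_0)}\, G^{\kappa,\eps}_{(q_0, p_0)} + O(\sqrt{\eps}),
\end{equation*}
where $G^{\kappa,\eps}_{(q_0, p_0)}$ is a Gaussian wave packet concentrated at $\kappa(q_0, p_0)$ in phase space, whose complex width is determined by $F^\kappa(q_0, p_0)$, $\Theta^x$, $\Theta^y$ and whose $L^2$-norm is a strictly positive constant independent of $\eps$.

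Fix $(q_0, p_0)$ with $u(q_0, p_0) \neq 0$. If $\kappa_1(q_0, p_0) \neq \kappa_2(q_0, p_0)$, the two leading wave packets are concentrated at distinct phase-space points and become asymptotically orthogonal; their $L^2$-inner product decays exponentially in $1/\eps$. Hence
\begin{equation*}
\bigl\|\Ical^\eps(\kappa_1;u;\Theta^x,\Theta^y) g^\eps_{(q_0, p_0)} - \Ical^\eps(\kappa_2;v;\Theta^x,\Theta^y) g^\eps_{(q_0, p_0)}\bigr\|^2_{L^2} \longrightarrow |u(q_0, p_0)|^2 c_1 + |v(q_0, p_0)|^2 c_2 > 0,
\end{equation*}
contradicting the hypothesis. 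So $\kappa_1 = \kappa_2$ on the open set $\{u \neq 0\}$, hence on $\supp u$ by continuity; differentiation further gives $F^{\kappa_1} = F^{\kappa_2}$ on $\{u \neq 0\}$, so the leading Gaussian wave packets coincide at such points. The vanishing of the $L^2$-difference then yields $u(q_0, p_0) = v(q_0, p_0)$ on $\{u \neq 0\}$, and hence on $\supp u$. Outside $\supp u$, the same reasoning applied with the roles of $(\kappa_1, u)$ and $(\kappa_2, v)$ swapped shows $v$ must also vanish, so $u = v$ pointwise on $\Rbb^{2d}$.

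The main technical obstacle is the complex Laplace analysis producing the leading asymptotic: identifying the critical point, verifying nondegeneracy of the complex Hessian (driven by the positive-definite real parts of $\Theta^x, \Theta^y$ together with the symplectic structure of $F^\kappa$), and controlling the remainder uniformly using only the class-$\Bcal$ bounds on $\kappa$. The quantitative asymptotic orthogonality of Gaussian wave packets at distinct centers, though classical, must also be made precise enough to give the strict positivity claimed above.
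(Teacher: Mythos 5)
Your proposal is correct in substance and belongs to the same family of ideas as the paper's proof --- test against coherent states, localise by stationary phase/Laplace asymptotics, then run the same case analysis on whether $\kappa_1(q_0,p_0)=\kappa_2(q_0,p_0)$ --- but the technical route is genuinely different. The paper never computes the full image $\Ical^\eps(\kappa;u;\Theta^x,\Theta^y)g^{\eps,\Theta^y}_{(q_0,p_0)}$ in $L^2$: it pairs it with a second coherent state placed at the \emph{image} point $\kappa'(q_0,p_0)$ (with width matrices matched to $\Theta^x$ and $\Theta^y$), so that after evaluating the Gaussian $x$- and $y$-integrals everything reduces to a scalar oscillatory integral in $(q,p)$ whose phase $\Psi^{\kappa,\kappa'}$ has vanishing imaginary part and stationary real part exactly when $(q,p)=(q_0,p_0)$ and $\kappa(q_0,p_0)=\kappa'(q_0,p_0)$; non-stationary phase plus H\"ormander's stationary phase theorem then give a limit of the form $C[\kappa;\Theta^x;\Theta^y]\,w(q_0,p_0)$ with an explicit non-vanishing constant, and the dichotomy follows from this single scalar quantity. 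Your version instead requires the stronger intermediate statement that the FIO maps a coherent state to a Gaussian wave packet at $\kappa(q_0,p_0)$ up to an $L^2$-error $O(\sqrt{\eps})$, uniformly under the class-$\Bcal$ bounds, and then the asymptotic orthogonality of Gaussians with distinct phase-space centres; this is standard but more work than the matrix-element computation, whose only analytic input is pointwise stationary phase --- in exchange, your wave-packet asymptotic is a reusable statement of independent interest (it is essentially the mechanism behind the approximation itself). Two points to tighten: the limit of the squared norm in the case of distinct image points need not exist because of the oscillating factors $e^{iS^{\kappa_j}(q_0,p_0)/\eps}$, so argue with the $\liminf$, which is still bounded below by $|u(q_0,p_0)|^2c_1>0$; and in the coincidence case the two actions $S^{\kappa_1}$, $S^{\kappa_2}$ are each fixed only up to an additive constant, so the difference of the leading terms carries a factor $e^{i(S^{\kappa_1}-S^{\kappa_2})(q_0,p_0)/\eps}$ --- if this constant is non-zero the factor is dense on the unit circle as $\eps\to0$ and the hypothesis forces $u(q_0,p_0)=v(q_0,p_0)=0$, so the conclusion survives, but the step ``the vanishing of the $L^2$-difference yields $u=v$'' should address it explicitly (the paper's own stationary-phase limit quietly elides the same phase factor).
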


\subsection{Some auxiliary results}
The proofs of these results will strongly rely on results about conversion of $x$-dependence to $\eps$-dependence. We start with the following auxiliary result.

\begin{lemma}\label{lemma:Z_inv}
We have
\begin{align*}
i\Phi^\kappa_{z}(x,y,q,p;\Theta^x,\Theta^y)=Z_z^\kappa(q,p)\left(\Theta^x\right)^{\frac12}(x-X^\kappa(q,p))=:\Zcal(x-X^\kappa(q,p)).
\end{align*}
$\Zcal=(i\left(\Theta^y\right)^{-1}\quad\id)(F^\kappa(q,p))^\dagger(-i\Theta^x\quad\id)^\dagger$ is invertible and its inverse $\Zcalinv$ is in the class $S[0;2d]$.
\end{lemma}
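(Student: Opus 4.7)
The first identity is a direct differentiation of $\Phi^\kappa$. Computing $\partial_q\Phi^\kappa$ and $\partial_p\Phi^\kappa$ and using the action relations $S^\kappa_q=-p+X^\kappa_q\Xi^\kappa$, $S^\kappa_p=X^\kappa_p\Xi^\kappa$ cancels all contributions that are not multiples of $(x-X^\kappa)$ or $(y-q)$; forming $\partial_z=(\Theta^y)^{-1}\partial_q-i\partial_p$ then annihilates the $(y-q)$ terms, so that $i\partial_z\Phi^\kappa$ collapses to a matrix times $(x-X^\kappa)$ whose coefficient, after grouping, is $X^\kappa_z\Theta^x+i\Xi^\kappa_z$. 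Substituting $X^\kappa_z=(\Theta^y)^{-1}X^\kappa_q-iX^\kappa_p$ and the analogous expression for $\Xi^\kappa_z$ into this identity reproduces the block-matrix formula in the statement.

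The heart of the lemma is invertibility, which I will prove by a symplectic-positivity argument. Set $A=\binom{-i\Theta^x}{\id}$, $B=\binom{i(\Theta^y)^{-1}}{\id}$, so $\mathcal{Z}=B^\dagger(F^\kappa)^\dagger A$. For $v\in\Cbb^d$ define $w:=(F^\kappa)^\dagger Av\in\Cbb^{2d}$, and decompose $w=\binom{w_1}{w_2}$. The relation $\mathcal{Z}v=B^\dagger w$ then reads $w_2=\mathcal{Z}v-i(\Theta^y)^{-1}w_1$. I compute $w^\ast Jw$ in two ways. From the definition of $w$ and the symplectic identity $F^\kappa J(F^\kappa)^\dagger=J$ (together with the reality of $F^\kappa$) one obtains $w^\ast Jw=v^\ast A^\ast JAv=2iv^\ast(\Re\Theta^x)v$. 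Inserting the block parameterisation of $w$ directly yields $w^\ast Jw=2i\,\Im(\bar w_1^\dagger\mathcal{Z}v)-2i\,\bar w_1^\dagger\Re((\Theta^y)^{-1})w_1$. Equating the two expressions and rearranging produces the key identity
$$
v^\ast(\Re\Theta^x)v+\bar w_1^\dagger\Re((\Theta^y)^{-1})w_1=\Im(\bar w_1^\dagger\mathcal{Z}v).
$$
A short auxiliary computation (conjugating $\Theta$ by $(\Re\Theta)^{1/2}$ and diagonalising the resulting real-symmetric imaginary piece) gives $\Re(\Theta^{-1})>0$ whenever $\Theta$ is complex-symmetric with $\Re\Theta>0$. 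Thus both summands on the left are non-negative; setting $\mathcal{Z}v=0$ forces them to vanish, and strict positive-definiteness of $\Re\Theta^x$ then yields $v=0$.

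The same identity, used quantitatively, gives the $S[0;2d]$-bound on $\mathcal{Z}^{-1}$. With $c_x:=\lambda_{\min}(\Re\Theta^x)$ and $c_y:=\lambda_{\min}(\Re((\Theta^y)^{-1}))$ (both strictly positive), Cauchy--Schwarz on the right-hand side combined with an AM--GM split $\|w_1\|\,\|\mathcal{Z}v\|\leq(c_y/2)\|w_1\|^2+\|\mathcal{Z}v\|^2/(2c_y)$ absorbs the $\|w_1\|^2$-contribution and leaves the pointwise lower bound $\|\mathcal{Z}v\|\geq\sqrt{2c_xc_y}\,\|v\|$. Hence $\|\mathcal{Z}^{-1}\|_{\mathrm{op}}$ is bounded uniformly in $(q,p)$. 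Since $\mathcal{Z}$ itself lies in $S[0;2d]$ as a constant-coefficient combination of the entries of $F^\kappa$ (which is of class $S[0;2d]$ by the class-$\Bcal$ hypothesis), the recursion $\partial_{(q,p)}\mathcal{Z}^{-1}=-\mathcal{Z}^{-1}(\partial_{(q,p)}\mathcal{Z})\mathcal{Z}^{-1}$ and an induction on the differentiation order propagate these bounds to all derivatives, giving $\mathcal{Z}^{-1}\in S[0;2d]$.

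The main obstacle is devising the two-way evaluation of $w^\ast Jw$: its symplectic route furnishes a positive-definite contribution from $\Re\Theta^x$, while its direct block computation furnishes one from $\Re((\Theta^y)^{-1})$, and matching them produces the decisive non-negativity. Everything else is routine once this identity is in hand.
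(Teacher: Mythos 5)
Your proof is correct: the computation of $i\Phi^\kappa_z$ matches the paper's (differentiate, use the action relations, note that $\partial_z=(\Theta^y)^{-1}\partial_q-i\partial_p$ kills the $(y-q)$ terms, and identify the coefficient $X^\kappa_z\Theta^x+i\Xi^\kappa_z$ with the stated block formula), and your invertibility and symbol-class arguments are sound, including the implicit facts you use ($F\in\mathrm{Sp}(2d)$ implies $FJF^\dagger=J$, and $\Re(\Theta^{-1})>0$ for complex symmetric $\Theta$ with $\Re\Theta>0$).

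Where you genuinely diverge from the paper is in how the symplectic positivity is packaged. The paper proves the matrix identity
$\mathcal{Z}(\Re\Theta^x)^{-1}\mathcal{Z}^\ast=2\Re\left(\Theta^y\right)^{-1}+N^\ast N$
for an explicit matrix $N$ built from $F^\kappa$ and a triangular factor $\Lambda(\Theta^x)$ of the Gaussian $\Theta^x$; it then invokes superadditivity of the determinant for positive definite hermitian matrices to bound $|\det\mathcal{Z}|$ away from zero uniformly, and obtains $\mathcal{Z}^{-1}\in S[0;2d]$ from the cofactor (minors) formula. You instead polarise the same symplectic positivity at the vector level: evaluating $w^\ast Jw$, $w=(F^\kappa)^\dagger Av$, both via $FJF^\dagger=J$ and via the block decomposition yields the scalar identity $v^\ast(\Re\Theta^x)v+w_1^\ast\Re((\Theta^y)^{-1})w_1=\Im(w_1^\ast\mathcal{Z}v)$, from which Cauchy--Schwarz and absorption give the coercivity bound $\|\mathcal{Z}v\|\geq\sqrt{2c_xc_y}\,\|v\|$, and you then propagate uniform bounds to all derivatives through $\partial\mathcal{Z}^{-1}=-\mathcal{Z}^{-1}(\partial\mathcal{Z})\mathcal{Z}^{-1}$. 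Your route is somewhat more elementary (no determinant superadditivity, no cofactors) and delivers the uniform operator-norm bound on $\mathcal{Z}^{-1}$ directly, which is all the lemma needs. What the paper's version buys in exchange is an explicit uniform lower bound on $|\det\mathcal{Z}|$, which is reused elsewhere (the coefficients in Proposition~\ref{prop:comp_PDO} have denominators $\det(\mathcal{Z})^m$, and the proof of Corollary~\ref{coro:ehrenfest} cites precisely this bound); note, though, that your estimate recovers it, since $|\det\mathcal{Z}|\geq\|\mathcal{Z}^{-1}\|^{-d}$ together with the boundedness of $\mathcal{Z}$ gives the same uniform control.
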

\newpage
\begin{remark}~\label{rem:Z_const}
\begin{enumerate}
\item We recall that
$$
Z^\kappa_z(q,p)=\left(\left(\Theta^y\right)^{-1}\partial_q-i\partial_p\right)\left(\left(\Theta^x\right)^{\frac12} X^\kappa(q,p)+i\left(\Theta^x\right)^{-\frac12}\Xi^\kappa(q,p)\right).
$$
\item Obviously, $\Zcal$ depends on $q$ and $p$ only via the elements of $F^\kappa(q,p)$. For better readability, we do not explicitly denote this dependence. Moreover, we drop the dependence on $\Theta^x$ and $\Theta^y$ in the notation.
\item For a linear canonical transformation $\kappa(q,p)=M\begin{pmatrix}q\\p\end{pmatrix}$,
with $M\in {\rm Sp}(2d)$, we have $F^\kappa(q,p)=M$, so $\Zcal=(i\,\id\;\;(\Theta^y)^{-1})M^\dagger(-i\Theta^x\;\;\id)^\dagger$ is constant with respect to $(q,p)$.
\end{enumerate}

\end{remark}
\begin{proof}
The derivatives of $\Phi^\kappa(x,y,q,p;\Theta^x,\Theta^y)$ with respect to $q$ and $p$ are
\begin{align*}
\Phi^\kappa_q(x,y,q,p;\Theta^x,\Theta^y)&=[\Xi^\kappa_{q}-iX_{q}^\kappa\Theta^x](q,p)(x-X^\kappa(q,p))-i\Theta^y(y-q)\\
\Phi^\kappa_p(x,y,q,p;\Theta^x,\Theta^y)&=[\Xi^\kappa_{p}-iX_{p}^\kappa\Theta^x](q,p)(x-X^\kappa(q,p))-(y-q),
\end{align*}
which gives the identity for $\Zcal$. Obviously, $\Zcal$ inherits its symbol class from
$F^\kappa(q,p)$. Moreover, we have
\begin{align*}
&\quad\Zcal\left(\Re\Theta^x\right)^{-1}\Zcal^\ast\\
&=2\Re\left(\Theta^y\right)^{-1}+\left(\Lambda\left(\Theta^x\right)F^\kappa(q,p)
\begin{pmatrix}
i\left(\Theta^y\right)^{-1}\\
-\id
\end{pmatrix}\right)^*\left(\Lambda\left(\Theta^x\right)F^\kappa(q,p)\begin{pmatrix}
i\left(\Theta^y\right)^{-1}\\
-\id
\end{pmatrix}\right)
\end{align*}
with
$$
\Lambda(\Theta)=
\begin{pmatrix}
(\Re\Theta)^{1/2}&0\\
(\Re\Theta)^{-1/2}\Im\Theta&(\Re\Theta)^{-1/2}\\
\end{pmatrix}.
$$
Hence, by the superadditivity of the determinant for positive definite hermitian matrices, $\det \Zcal$ is uniformly bounded away from $0$ for all $q$ and $p$, so by 
its expression via the formula of minors, $\Zcalinv\in S[0;2d]$, as $\Zcal$ is.
\end{proof}

We introduce the following notation:
\begin{definition}
Two symbols $u,v\in S[+\infty;4d]$ are said to be \textbf{equivalent with respect to $\kappa$} if
$$
\Ical^\eps(\kappa;u;\Theta^x,\Theta^y)=\Ical^\eps(\kappa;v;\Theta^x,\Theta^y)
$$
as operators from $\Scal(\Rbb^d,\Cbb)$ to $\Scal(\Rbb^d,\Cbb)$. In this case we write $u\sim v$.
\end{definition}

The central technical identity, on which most of the following results rely, is contained in the following lemma, which is a special case of Lemma 5 in~\cite{[RousseSwart]}. Here we present a proof based on definition~\ref{defi:FIO}, whereas in {\rm \cite{[RousseSwart]}} an alternative definition based on a smoothing of the oscillatory integrals is used.
\begin{lemma}\label{lemma:part_int}
Let $u\in S[(m^x,m^q,m^p);(d,d,d)]$.
Then
$$
(x_{j}-X^\kappa_{j}(q,p))u\sim \eps v,
$$
where $v\in S[(m^x,m^q,m^p);(d,d,d)]$ is given by
\begin{align*}
v(x,q,p)
&=-\div\left(e_{j}^\dagger \Zcalinv u(x,q,p)\right)
\end{align*}
\end{lemma}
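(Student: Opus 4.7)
The plan is to exploit the identity
$$i\,\Phi^\kappa_z(x,y,q,p;\Theta^x,\Theta^y) = \Zcal\,(x-X^\kappa(q,p))$$
from Lemma~\ref{lemma:Z_inv}, which lets us turn the polynomial factor $x_j - X_j^\kappa$ into a phase-gradient, so that an integration by parts in $(q,p)$ produces exactly one power of $\eps$. Since $\Zcal$ is invertible with $\Zcalinv \in S[0;2d]$, I would first invert the identity to obtain
$$x_j - X_j^\kappa(q,p) \;=\; i\,e_j^\dagger\,\Zcalinv\,\Phi^\kappa_z,$$
and then use the elementary relation $\Phi^\kappa_z\,e^{\frac{i}{\eps}\Phi^\kappa} = -i\eps\,\partial_z\,e^{\frac{i}{\eps}\Phi^\kappa}$ to rewrite the kernel as
$$\bigl[(x_j-X_j^\kappa)u(x,q,p)\bigr]\,e^{\frac{i}{\eps}\Phi^\kappa} \;=\; \eps\,(e_j^\dagger \Zcalinv)\cdot\bigl[\partial_z\,e^{\frac{i}{\eps}\Phi^\kappa}\bigr]\,u(x,q,p).$$

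Next I would integrate by parts in the phase-space variables $(q,p)$. Because $\partial_z = (\Theta^y)^{-1}\partial_q - i\,\partial_p$ is a linear combination of $\partial_q$ and $\partial_p$, the integration by parts is performed on $\Rbb^{2d}$ and moves $\partial_z$ onto $e_j^\dagger\Zcalinv u$ with an overall minus sign, yielding
$$(x_j - X_j^\kappa)u \;\sim\; -\eps\,\div\bigl(e_j^\dagger\,\Zcalinv\,u(x,q,p)\bigr) \;=\; \eps\,v(x,q,p),$$
which is exactly the stated formula. The fact that $v$ lies again in $S[(m^x,m^q,m^p);(d,d,d)]$ is then essentially automatic: $\Zcalinv \in S[0;2d]$ by Lemma~\ref{lemma:Z_inv}, multiplication of symbols adds exponents componentwise, and the single $(q,p)$-derivative in $\div$ does not raise the growth exponents in this class.

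The bulk of the computation is short; the real care goes into justifying the integration by parts, which is where I expect the main obstacle. The symbol $u$ has only polynomial growth in $(x,q,p)$, and $e^{\frac{i}{\eps}\Phi^\kappa}$ is merely oscillatory in $p$ (with Gaussian decay in $y-q$ and in $x-X^\kappa$ from the imaginary parts $\tfrac{i}{2}|y-q|^2_{\Theta^y}$ and $\tfrac{i}{2}|x-X^\kappa|^2_{\Theta^x}$), so the $(q,p)$-integral is only conditionally convergent. The proper way to handle this is to perform the manipulation under the regularisation $(L_y^\dagger)^n$ built into Definition~\ref{defi:FIO}: for $n$ large enough the integrand becomes absolutely integrable in $y$, the Gaussian in $(y-q)$ then controls the $q$-integration, and the decay of the Schwartz test function $\phi(y)$ combined with the oscillatory structure of the $p$-integral handles the $p$-direction. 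Under these circumstances Fubini applies and the standard one-line integration by parts with vanishing boundary terms is rigorous; alternatively one can reduce to the smoothing-parameter definition of the oscillatory integral used in~\cite{[RousseSwart]} and pass to the limit. Once this technical point is in place, the equivalence $(x_j-X_j^\kappa)u \sim \eps v$ follows directly.
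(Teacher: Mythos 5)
Your algebraic core is exactly the paper's: the identity $i\Phi^\kappa_z=\Zcal\,(x-X^\kappa(q,p))$ from Lemma~\ref{lemma:Z_inv}, the rewriting of $(x_j-X_j^\kappa)e^{\frac{i}{\eps}\Phi^\kappa}$ as $\eps\,e_j^\dagger\Zcalinv\partial_z e^{\frac{i}{\eps}\Phi^\kappa}$, and one integration by parts in $(q,p)$ producing $-\eps\,\div\bigl(e_j^\dagger\Zcalinv u\bigr)$, with the symbol-class statement following from $\Zcalinv\in S[0;2d]$. The gap is in the justification of that integration by parts, which is the actual content of the proof. Without the regularisation, the $(q,p)$-integral is not absolutely convergent in $p$; the Schwartz decay of $\phi(y)$ and ``the oscillatory structure of the $p$-integral'' do not repair this, since $\phi$ gives no decay in $p$ and mere oscillation does not license Fubini or the discarding of boundary terms. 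With the regularisation kept in place, a different problem appears: the integrand is $e^{\frac{i}{\eps}\Phi^\kappa}(L_y^\dagger)^n[u\phi]$, and when you integrate by parts in $(q,p)$ the derivative $\partial_z$ also hits the $(q,p)$-dependent coefficients of $(L_y^\dagger)^n$ (they contain $\nabla_y\Phi^\kappa$), so you do not land on the regularised kernel of $\Ical^\eps(\kappa;\eps v;\Theta^x,\Theta^y)$, i.e.\ on $e^{\frac{i}{\eps}\Phi^\kappa}(L_y^\dagger)^n[v\phi]$, without a further argument identifying the two regularised expressions. Your one-line appeal to Fubini and vanishing boundary terms covers neither difficulty.

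The paper's proof supplies precisely the missing device: a countable, locally finite partition of unity $\chi_l(q,p)$ whose derivative partial sums are uniformly bounded. On each piece the $(q,p)$-support is compact, so by dominated convergence the factors $(L_y^\dagger)^m$ can be removed, the integral is absolutely convergent, and the integration by parts in $z$ is elementary with no boundary terms; this yields the main term \eqref{eq:PI_1}. The price is the extra term \eqref{eq:PI_2} in which $\partial_{z_k}$ falls on $\chi_l$; reinserting $(L_y^\dagger)^m$ dominates it uniformly in the partial sums, and since $\sum_l\partial_{z_k}\chi_l\to 0$ pointwise, dominated convergence makes this contribution vanish. Your alternative suggestion---reducing to the smoothing-parameter definition of the oscillatory integral from~\cite{[RousseSwart]} and passing to the limit---would indeed work (the paper notes this lemma is a special case of Lemma~5 there), but you only mention it without carrying it out, so as written the proof is incomplete at its one genuinely delicate step.
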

\begin{proof}
We introduce a countable, locally finite partition of unity $\chi_{l}(q,p)$ with 
$\sup_{N,q,p}\left|\sum_{l=0}^{N}\partial_{z_k}\chi_{l}(q,p)\right|=C<\infty$.
Let $m>m^p+d$. Using dominated convergence, we get
\begin{align}
&\Ical^\eps\left(\kappa;(x_j-X_j^\kappa(q,p))u;\Theta^x,\Theta^y\right)\phi(x)\nonumber\\
&=\frac{1}{(2\pi\eps)^{3d/2}}\sum\limits_{l=0}^{\infty}\int_{\Rbb^{3d}}\chi_{l}(q,p)(x_j-X^\kappa_{j}(q,p)) e^{\frac{i}{\eps}\Phi^\kappa}(L^\dagger_{y})^m(u\phi)(x,y,q,p)\;dq\, dp\, dy\nonumber\\
&=\frac{1}{(2\pi\eps)^{3d/2}}\sum\limits_{l=0}^{\infty}\int_{\Rbb^{3d}}\chi_{l}(q,p)(x_j-X^\kappa_{j}(q,p)) e^{\frac{i}{\eps}\Phi^\kappa}(u\phi)(x,y,q,p)\;dq\, dp\, dy\nonumber\\
&=\frac{-\eps}{(2\pi\eps)^{3d/2}}\sum\limits_{l=0}^{\infty}\int_{\Rbb^{3d}}\chi_{l}(q,p)e^{\frac{i}{\eps}\Phi^\kappa}\sum_{k=1}^{d}\partial_{z_k}\left(\mathcal{Z}^{-1}_{jk}u\right)(x,q,p)\phi(y)\;dq\, dp\, dy\label{eq:PI_1}\\
&\quad-\frac{\eps}{(2\pi\eps)^{3d/2}}\sum\limits_{l=0}^{\infty}\int_{\Rbb^{3d}}\sum_{k=1}^{d}\left(\partial_{z_k}\chi_{l}\right)(q,p)e^{\frac{i}{\eps}\Phi^\kappa}(\mathcal{Z}^{-1}_{jk}u\phi)(x,y,q,p)\;dq\, dp\, dy\label{eq:PI_2}
\end{align}
After introducing $m$ powers of $L_y^\dagger$, the integrand in~\eqref{eq:PI_2} is dominated by
$$
C\sum_{k=1}^d\left|\mathcal{Z}^{-1}_{jk}(q,p)e^{\frac{i}{\eps}\Phi^\kappa}(L_{y}^\dagger)^{m}(u\phi)(x,y,q,p)\right|.
$$
Thus~\eqref{eq:PI_2} vanishes, whereas~\eqref{eq:PI_1} is the expected quantity.
\end{proof}

By iterative applications, the previous lemma easily extends to any polynomial $x$ dependence. We will only state the result for quadratic polynomials and refer the reader to the proof of Proposition~\ref{prop:comp_PDO}, where a similar problem is treated, for the general case.
\begin{proposition}\label{prop:x_independent_symbol}
Let $P(q,p,x)=\alpha(q,p)+a(q,p)\cdot x+x\cdot A(q,p) x$ be a quadratic polynomial with coefficients in $S[m_P;2d]$ and $u\in S[m_u;2d]$. Then
$$
P\left(q,p,x-X^\kappa(q,p)\right)u(q,p) \quad\sim\quad \sum\limits_{n=0}^2 \eps^n v_n(q,p),
$$
where the $v_n\in S[m_P+m_u;2d]$  are given by
\begin{align*}
v_0(q,p)&=\alpha(q,p)\, u(q,p)\\
\\
v_1(q,p)&=-\div\left(a(q,p)^\dagger\Zcalinv u(q,p)\right)\\
&\quad+\tr(\Zcalinv X^\kappa_z(q,p) A(q,p))\,u(q,p)\\
\\
v_2(q,p)&=\sum\limits_{k=1}^{d}\div\left(\partial_{z_k}[A(q,p)\Zcalinv e_k\,u(q,p)]^\dagger \Zcalinv\right)
\end{align*}
\end{proposition}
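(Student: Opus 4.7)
The plan is to apply Lemma~\ref{lemma:part_int} iteratively, once for each factor $(x_j - X^\kappa_j(q,p))$ appearing in the evaluation of $P$. Split
\begin{align*}
P(q,p,x-X^\kappa(q,p))u(q,p)
&= \alpha(q,p)u(q,p) + \sum_{j}a_{j}(q,p)(x_j-X^\kappa_j(q,p))u(q,p)\\
&\quad + \sum_{j,k}A_{jk}(q,p)(x_j-X^\kappa_j(q,p))(x_k-X^\kappa_k(q,p))u(q,p),
\end{align*}
where one may assume $A=A^\dagger$ by symmetrising the bilinear form. The constant term is already $x$-independent and delivers $v_0$. For the linear piece, a single application of Lemma~\ref{lemma:part_int} to each $j$ produces the $\eps^1$-contribution $-\sum_j \div(e_j^\dagger \Zcalinv a_j u)=-\div(a^\dagger \Zcalinv u)$, which is the first summand of $v_1$.

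For the quadratic piece I apply Lemma~\ref{lemma:part_int} first to the factor $(x_k-X^\kappa_k)$, treating $A_{jk}(x_j-X^\kappa_j)u$ as the symbol (which is admissible since polynomial $x$-dependence is allowed by the symbol class of the lemma). Expanding the resulting divergence by the Leibniz rule and using $\partial_{z_l}(x_j-X^\kappa_j)=-\partial_{z_l}X^\kappa_j$ yields two contributions: one still proportional to $(x_j-X^\kappa_j)$, and an $x$-independent piece of order $\eps$. The latter, summed over $j,k$, collapses via the cyclic property of the trace to $u\,\tr(\Zcalinv X^\kappa_z A)$, which is exactly the second summand of $v_1$. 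A second application of Lemma~\ref{lemma:part_int} to the remaining $(x_j-X^\kappa_j)$-factor, with the now $x$-independent symbol $-\div(e_k^\dagger \Zcalinv A_{jk}u)$, produces the $\eps^2$-contribution $\sum_{j,k}\div\bigl(e_j^\dagger \Zcalinv \,\div(e_k^\dagger \Zcalinv A_{jk}u)\bigr)$.

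The main obstacle is then purely notational: rewriting this iterated double-divergence in the compact matrix form $\sum_k \div\bigl(\partial_{z_k}[A\Zcalinv e_k\,u]^\dagger \Zcalinv\bigr)$ claimed for $v_2$. Expanding both sides component-wise and relabeling the summation indices $(j,k,l,m)\leftrightarrow(i,l,k,m)$ shows they coincide. The symbol-class claims $v_n\in S[m_P+m_u;2d]$ follow routinely from the algebraic stability of $S[\cdot;\cdot]$ under finite sums, products, and derivatives, together with $\Zcalinv\in S[0;2d]$ from Lemma~\ref{lemma:Z_inv} and the class-$\Bcal$ hypothesis on $\kappa$. The only real subtlety throughout is keeping careful track of the two distinct sources of $\eps^1$-contributions to $v_1$: one from the linear part of $P$, and one hidden inside the first Leibniz expansion of the quadratic part.
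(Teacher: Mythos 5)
Your argument is correct and follows essentially the same route as the paper's own proof: two successive applications of Lemma~\ref{lemma:part_int}, with the Leibniz rule on the intermediate divergence producing both the trace contribution to $v_1$ and the residual $(x-X^\kappa)$-term that yields $v_2$. The only cosmetic difference is that you work component-wise in indices $j,k,l,m$ while the paper carries out the same two applications of the lemma in vectorized matrix form, so no further comment is needed.
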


\begin{proof}
The statement follows from two successive applications of Lemma~\ref{lemma:part_int}. The quadratic part contributes to $v_1$ and $v_2$:
\begin{align*}
&\quad\left(x-X^\kappa(q,p)\right)\cdot A(q,p) \left(x-X^\kappa(q,p)\right)u(q,p)\\
&\sim -\eps\div\left(u(q,p)(x-X^\kappa(q,p))^\dagger A(q,p) \Zcalinv\right)\\
&=-\eps\tr\left(\partial_z\left[u(q,p)(x-X^\kappa(q,p))^\dagger A(q,p) \Zcalinv\right]\right)\\
&=\eps\tr\left(A(q,p) \Zcalinv X_z^\kappa(q,p)\right)\,u(q,p)\\
&\qquad-\eps\sum\limits_{k=1}^{d} (x-X^\kappa(q,p))^\dagger\left(\partial_{z_k} (A(q,p)\Zcalinv e_k\, u(q,p))\right)\\
&\sim\eps\tr\left(\Zcalinv X_z^\kappa(q,p)A(q,p) \right)\,u(q,p)\\
&\qquad+\eps^2\sum\limits_{k=1}^{d} \div\left[\partial_{z_k}(A(q,p) \Zcalinv e_k\,u(q,p))^\dagger \Zcalinv\right],
\end{align*}
where we used the invariance of the trace under cyclic permutations.
\end{proof}

\subsection{Proofs of the Propositions~\ref{prop:comp_PDO}--\ref{prop:uniqueness}}
We have now established all auxiliary results necessary for proving our central propositions

\begin{proof} (of Proposition~\ref{prop:comp_PDO})
Let $\phi\in\Scal(\Rbb^d,\Cbb)$. The composition of $op^\eps(h)$ with the FIO applied to $\phi$ is
\begin{align*}
&\quad[op^\eps(h) \Ical^\eps(\kappa;u;\Theta^x,\Theta^y)\varphi](x)\\
&=\frac{1}{(2\pi\eps)^{5d/2}}\int_{\Rbb^{5d}}h\left(\frac{x+w}{2},\xi\right)
e^{\frac{i}{\eps}\Psi^\kappa(x,w,y,q,p;\Theta^x,\Theta^y)}u(q,p)\varphi(y)\; dq\, dp\, dy\, dw\, d\xi,
\end{align*}
where
$
\Psi^\kappa(x,w,y,q,p;\Theta^x,\Theta^y):=\xi\cdot(x-w)+\Phi^\kappa(w,y,q,p;\Theta^x,\Theta^y).
$

We introduce the following combinations of positions and momenta, which correspond to creation and annihilation ``variables''
\begin{align*}
a(x,\xi)&:=\left(\Theta^x\right)^{\frac12} x+i\left(\Theta^x\right)^{-\frac12}\xi,\\
\bar{a}(x,\xi)&:=\left(\Theta^x\right)^{\frac12} x-i\left(\Theta^x\right)^{-\frac12}\xi
\end{align*}
and their ``dual operators''
\begin{align*}
\partial_a&:=\frac12\left(\left(\Theta^x\right)^{-\frac12}\partial_x-i\left(\Theta^x\right)^{\frac12}\partial_\xi\right)\\
\partial_{\bar a}&:=\frac12\left(\left(\Theta^x\right)^{-\frac12}\partial_x+i\left(\Theta^x\right)^{\frac12}\partial_\xi\right).
\end{align*}
The Taylor-expansion of the symbol $h$ to order $2N$ around $\kappa(q,p)$ reads
\begin{align*}
h\left(x,\xi\right)&=\sum\limits_{|\alpha+\beta|\leq 2N}
\frac{1}{\alpha!\beta!}\left(\left(\partial_{a}^\alpha\partial_{\bar{a}}^\beta h\right)\circ\kappa\right)(q,p)
\left(a-Z^\kappa(q,p)\right)^\alpha\left(\bar{a}-\bar{Z}^\kappa(q,p)\right)^\beta\\
&\qquad+\sum\limits_{|\alpha+\beta|=2N+1}\left(a-Z^\kappa(q,p)\right)^\alpha\left(\bar{a}-\bar{Z}^\kappa(q,p)\right)^\beta R_{\alpha,\beta}\left(a,\bar{a},q,p\right)\\
&=h_{\rm T}\left(a-Z^\kappa(q,p),\bar{a}-\bar{Z}^\kappa(q,p)\right)+h_{\rm R}\left(a,\bar{a},q,p\right),
\end{align*}
where
\begin{align*}
&\quad R_{\alpha,\beta}\left(a,\bar{a},q,p\right)\\
&=\frac{|\alpha+\beta|}{\alpha!\beta!}\int\limits_0^1 \sigma^{|\alpha+\beta|-1}\left(\partial^\alpha_a\partial^\beta_{\bar{a}} h\right)\textstyle{\left(x+\sigma\left(X^\kappa(q,p)-x\right),\xi+\sigma\left(\Xi^\kappa(q,p)-\xi\right)\right)}d\sigma.
\end{align*}
We split the integral into a part which contains the Taylor polynomial $h_{\rm T}$ and a remainder containing $h_{R}$.
In the first step, we discuss only the part containing $h_{\rm T}$. We have, dropping the argument
$\left(\frac{x+w}{2},\xi\right)$ of $a$ and $\bar a$,
\begin{align*}
\left(\frac12\left(\Theta^x\right)^{\frac12}\partial_\xi-i\left(\Theta^x\right)^{-\frac12}\partial_w\right) \Psi^\kappa&=a-Z^\kappa(q,p).
\end{align*}
Moreover
\begin{align*}
\left(\frac12\left(\Theta^x\right)^{\frac12}\partial_\xi-i\left(\Theta^x\right)^{-\frac12}\partial_w\right)(a-Z^\kappa(q,p))&=0\\
\left(\frac12\left(\Theta^x\right)^{\frac12}\partial_\xi-i\left(\Theta^x\right)^{-\frac12}\partial_w\right)(\bar{a}-\bar{Z}^\kappa(q,p))&=-i\;\id.
\end{align*}
By integration by parts we get
\begin{align}\label{eq:L_PDO_1}
\left(a-Z^\kappa(q,p)\right)^\alpha\left(\bar{a}-\bar{Z}^\kappa(q,p)\right)^\beta v(q,p)
\sim\frac{\eps^{|\alpha|}\beta!}{(\beta-\alpha)!}\left(\bar{a}-\bar{Z}^\kappa(q,p)\right)^{\beta-\alpha} v(q,p),
\end{align}
where we extended the meaning of ``$\sim$'' in an obvious way. We have
\begin{align*}
&\quad\left(\frac12\left(\Theta^x\right)^{\frac12}\partial_\xi+i\left(\Theta^x\right)^{-\frac12}\partial_w+2i\left(\Theta^x\right)^{\frac12}\Zcalinv\partial_z\right)\left(\bar{a}-\bar{Z}^\kappa(q,p)\right)\\
&=-2i\left(\Theta^x\right)^{\frac12}\Zcalinv\partial_z\bar{Z}^\kappa(q,p)
\end{align*}
and hence
\begin{align}
&\quad\left(\bar{a}-\bar{Z}^\kappa(q,p)\right)^{\gamma}v(q,p)\nonumber\\
&\sim-\frac{2\eps}{\#\gamma}\sum_{k|\gamma_k\neq 0}\div\left(e_k^\dagger\left(\Theta^x\right)^{\frac12}\Zcalinv\left(\bar{a}-\bar{Z}^\kappa(q,p)\right)^{\gamma-e_k} v(q,p)\right) \nonumber\\
&=\frac{\eps}{\#\gamma}\sum_{k|\gamma_k\neq 0}\left(\sum_{m=1}^d(\gamma-e_k)_m\left(\bar{a}-\bar{Z}^\kappa(q,p)\right)^{\gamma-e_k-e_m}
(\Lcal_{(e_k,e_m)} v)(q,p)\right.\label{eq:bricks}\\
&\left.\qquad\qquad\qquad+\left(\bar{a}-\bar{Z}^\kappa(q,p)\right)^{\gamma-e_k}(\Lcal_{e_k} v)(q,p)\right) \nonumber,
\end{align}
where the differential operators $\Lcal_{(e_k,e_m)}$ and $\Lcal_{e_k}$ are given by
\begin{align*}
(\Lcal_{(e_k,e_m)}v)(q,p)&:=2e_k^\dagger\left(\Theta^x\right)^{\frac12}\Zcalinv\partial_z\bar{Z}^\kappa e_m v(q,p)\\
(\Lcal_{e_k}v)(q,p)&:=-2\div\left(e_k^\dagger\left(\Theta^x\right)^{\frac12}\Zcalinv v(q,p)\right)
\end{align*}
and $\#\gamma$ denotes the number of non-zero components of $\gamma$.

The symmetrization by the summation over $k$ allows for the iteration of the procedure. We define the three sets
\begin{align*}
\Gamma_1:=\left\{\gamma\in\Nbb^d\;\middle|\;|\gamma|=1\right\},\quad
\Gamma_2:=\Gamma_1\times\Gamma_1,\quad\Gamma:=\Gamma_1\cup\Gamma_2.
\end{align*}
In expression~\eqref{eq:bricks} the sum is taken over all possible reductions of the multi-index $\gamma$ by elements of the ``brick-sets'' $\Gamma_1$ and $\Gamma_2$. After another integration by parts in all terms with $\left(\bar{a}-\bar{Z}^\kappa(q,p)\right)$-dependence, the sum is taken over all possible reductions of $\gamma$ by elements in $\Gamma\times\Gamma$, which may be considered as a two-step path in $\Gamma$, plus the terms which already led to $\gamma=0$ in the first step. So after the removal of all $\left(\bar{a}-\bar{Z}^\kappa(q,p)\right)$-dependence, the sum is taken over all possible paths in the ``brick-set'' $\Gamma$ which reduce $\gamma$ to zero.
To formalise this idea, we define the map 
\begin{align*}
[\;\cdot\;]\quad&:\qquad\Gamma\to\Nbb^d\\
[\gamma]\quad&:=\quad\begin{cases}
\gamma&\gamma\in\Gamma_1\\
\gamma_{1}+\gamma_{2}&\gamma=(\gamma_{1},\gamma_{2})\in\Gamma_2.
\end{cases}
\end{align*}
With
$$
\lambda(\gamma,\gamma_1,\ldots,\gamma_n)=
\begin{cases}
\left(\#(\gamma\!-\!\sum\limits_{l< n}[\gamma_{l}])\right)^{-1}&\gamma_n\in\Gamma_1\\
\left(\#(\gamma\!-\!\sum\limits_{l< n}[\gamma_{l}])\right)^{-1}\!\!\!\left(\gamma-\sum\limits_{l< n}[\gamma_{l}]-e_j\right)_k &\gamma_n=(e_j,e_k)\in\Gamma_2
\end{cases}
$$
we have
\begin{align}
&\left(\bar{a}-\bar{Z}^\kappa(q,p)\right)^{\gamma}v(q,p)\label{eq:L_PDO_2}\\
\sim&\sum_{\substack{
\gamma_1\ldots,\gamma_k\in\Gamma\\
[\gamma_1]+\ldots+[\gamma_k]=\gamma\nonumber
}}\eps^k
\lambda(\gamma,\gamma_1,\ldots,\gamma_k)\ldots\lambda(\gamma,\gamma_1,\gamma_2)\lambda(\gamma,\gamma_1)\left(\Lcal_{\gamma_k}\ldots\Lcal_{\gamma_1}v\right)(q,p).
\end{align}

Combining \eqref{eq:L_PDO_1} and \eqref{eq:L_PDO_2}, we get
\begin{align*}
&\quad(h_T u)(q,p)\\
&\sim\sum_{n=0}^N \eps^nL_n[h;\kappa;\Theta^x,\Theta^y]u(q,p)+\eps^{N+1}\tilde{L}^\eps_{N+1}[h;\kappa;\Theta^x,\Theta^y]u(q,p)\\
&=\sum\limits_{\substack{|\beta|\leq 2N\\\alpha\leq\beta}}\sum_{\substack{
\gamma_1\ldots,\gamma_k\in\Gamma\\
[\gamma_1]+\ldots+[\gamma_k]=\beta-\alpha
}}\hspace*{-5mm}
\frac{\eps^{|\alpha|+k}}{\alpha!(\beta-\alpha)!}\left(\prod_{l=1}^{k}
\lambda(\gamma,\gamma_1,\ldots,\gamma_l)\Lcal_{\gamma_l}\right)\left(u\,\partial_{a}^\alpha\partial_{\bar{a}}^\beta h\circ\kappa\right)(q,p)\\
&\qquad +\eps^{N+1}\tilde{L}^\eps_{N+1}[h;\kappa;\Theta^x,\Theta^y]u(q,p)
\end{align*}
where $\eps^{N+1}\tilde{L}^\eps_{N+1}[h;\kappa;\Theta^x,\Theta^y]u(q,p)$ contains all the terms of order $\eps^{N+1}$ and higher. As $k$ ranges between $\lceil|\beta-\alpha|/2\rceil$ and $|\beta-\alpha|$, we have
\begin{align*}
&\quad L_n[h;\kappa;\Theta^x,\Theta^y]\\
&=\sum_{\substack{n\leq|\alpha+\beta|\leq 2n\\\alpha\leq\beta}}\sum_{\substack{
\gamma_1,\ldots,\gamma_{n-|\alpha|}\in\Gamma\\
[\gamma_1]+\ldots+[\gamma_{n-|\alpha|}]=\beta-\alpha
}}\hspace*{-5mm}
\frac{1}{\alpha!(\beta-\alpha)!}\left(\prod_{l=1}^{n-|\alpha|}
\lambda(\gamma,\gamma_1,\ldots,\gamma_l)\Lcal_{\gamma_l}\right)
\end{align*}
with the following convention for $n-|\alpha|=0$
$$
\sum_{\gamma_1,\ldots,\gamma_{n-|\alpha|}}\prod_{l=1}^{n-|\alpha|}\lambda(\gamma,\gamma_1,\ldots,\gamma_l)\Lcal_{\gamma_l}=\id.
$$
For the first few terms in the expansion, we have more transparent expressions. The zeroth order term
$$
\left(h\circ\kappa\right)(q,p)u(q,p)
$$
is provided by $\alpha=\beta=0$. For the first order term, there are three contributions.
\begin{enumerate}
\item The terms with $|\beta|=1, \alpha=\beta$, which result in
$$
\eps\tr\left(\left(\left(\partial_a\partial_{\bar{a}}h\right)\circ\kappa\right)(q,p)\right)u(q,p).\\
$$
\item The terms $ |\beta|=1, \alpha=0$, which give
$$
-\eps\div\left(\left(\left(\partial_ {\bar{a}}h\right)^\dagger\circ\kappa\right)(q,p)\left(\Theta^x\right)^{\frac12}\Zcalinv u(q,p)\right).
$$
\item The first order contribution of terms $|\beta|=2, \alpha=0$, which is
$$
\eps\tr\left(\Zcalinv\partial_z\bar{Z}^\kappa(q,p)\left(\left({\rm Hess}_{\bar a}h\right)\circ\kappa\right)(q,p)\left(\Theta^x\right)^{\frac12}\right)u(q,p).
$$
\end{enumerate}
By an application of the chain rule, they may be combined to
\begin{align*}
&=-\eps\div\left(\left(\left(\partial_ {\bar{a}}h\right)^\dagger\circ\kappa\right)(q,p)\left(\Theta^x\right)^{\frac12}\Zcalinv u(q,p)\right)\\
&\quad+\eps\tr\left(\Zcalinv \partial_z\left(\left(\Theta^x\right)^{\frac12}\left(\left(\partial_{\bar{a}}h\right)\circ\kappa\right)(q,p)\right)\right)u(q,p).
\end{align*}
The second order term arises in a similar way.

The form of the coefficients of the differential operators $L_n[h;\kappa;\Theta^x,\Theta^y]$ follows, if $\Zcalinv$ is expressed by the formula of minors.
With respect to the symbol class of $v_n$, it is sufficient to note that $\kappa\in S[1;2d]$.

We turn to the discussion of the remainder. The $(a-Z^\kappa(q,p))$ and $(\bar{a}-\bar{Z}^\kappa(q,p))$-factors may be converted to $\eps$-dependence analogously to $h_T$, resulting in terms of order $\eps^{N+1}$ to $\eps^{2N+2}$. As $h(x,\xi)$ is polynomial in $\xi$, the resulting expression equals the  application of a differential operator of order $m_\xi$ to an FIO. We have $\partial_x\Phi^\kappa(x,y,q,p;\Theta^x,\Theta^y)=\Xi^\kappa(q,p)+i\Theta^x(x-X^\kappa(q,p))$, hence the symbol class by iterative applications of Lemma~\ref{lemma:part_int}.
\end{proof}
\begin{remark}
In the case $\kappa=\id, u(q,p)=1, \Theta^x=\Theta^y=\id$, the proof provides the asymptotic expansion of the Anti-Wick symbol of a Weyl quantised pseudodifferential operator.

We have $\Zcal=2\,\id$, $Z^\kappa=q+ip$ and $\bar{Z}^\kappa=q-ip$. Hence $\partial_z\bar{Z}^\kappa=0$, $\Lcal_{(e_j,e_k)}=0$ and $(\Lcal_{e_j}u)(q,p)=-(\partial_{z_j}u)(q,p)$. Moreover, all $\Lcal_{e_j}$ commute.

By straightforward calculation, the Anti-Wick symbol is
\begin{align*}
\sum_{|\beta|=0}^{N}\frac{(-1)^{|\beta|}\eps^{|\beta|}}{\beta!}\left({\rm\Delta}^\beta h\right)(q,p),
\end{align*}
where
$$
{\rm\Delta}^\beta=\prod_{k=1}^d(\partial_{x_k}^2+\partial_{\xi_k}^2)^{\beta_k}.
$$
Thus formally
$$
h_{\rm AW}=e^{-\eps{\rm\Delta}}h_{\rm Weyl},
$$
so we recover that the Anti-Wick quantisation is the solution of the Cauchy-problem for the inverse heat-equation at time $t=\eps$ with the Weyl-symbol as initial datum (compare{\rm ~\cite{[Martinez]}}, where the transition is expressed by the heat-kernel).
\end{remark}

Next, we give the easy proof of Proposition~\ref{prop:comp_time}:
\begin{proof} (of Proposition~\ref{prop:comp_time})
By direct computation
$
i\eps\dt \Ical^\eps\left(\kappa^t;u\right)
$
is an FIO with symbol
\begin{align*}
&\quad i\eps\dt u(t,q,p)-u(t,q,p)
\dt \Phi^{\kappa^t}(x,y,q,p;\Theta^x(t),\Theta^y)\\
&=i\eps\dt u(t,q,p)-u(t,q,p)
\left[\dt S^{\kappa^t}(q,p)-\dt X^{\kappa^t}(q,p)\cdot\Xi^{\kappa^t}(q,p)\right.\\
&\qquad\qquad\qquad+\left(\dt \Xi^{\kappa^t}(q,p)-i\left(\Theta^x(t)\dt X^{\kappa^t}(q,p)\right)\right)
\cdot\left(x-X^{\kappa^t}(q,p)\right)\\
&\qquad\qquad\qquad+\frac{i}{2}\left.\left(x-X^\kt(q,p)\right)\cdot\dt\Theta^x(t)\left(x-X^\kt(q,p)\right)\right]
\end{align*}
The expressions~\eqref{eq:v0_time} and \eqref{eq:v1_time} follow from Proposition~\ref{prop:x_independent_symbol}.
\end{proof}

We close this section with the proof of Proposition~\ref{prop:uniqueness}.
\begin{proof}(of Proposition~\ref{prop:uniqueness})
The proof relies on the inner product
\begin{equation}\label{eq:inner_product}
\left\langle g_{\kappa'(q_0,p_0)}^{\eps,\bar{\Theta^x}},\Ical^\eps\left(\kappa;w;\Theta^x,\Theta^y\right)g_{(q_0,p_0)}^{\eps,\Theta^y}\right\rangle
\end{equation}
for symbols $w\in S[0;2d]$ and canonical transformations $\kappa,\kappa'$ of class $\Bcal$, where
$$
g^{\eps,\Theta}_{(q,p)}(x)=\frac{\det\left(\Re\Theta\right)^\frac14}{(\pi\eps)^{d/4}}e^{-(x-q)\cdot\Theta(x-q)/2\eps}e^{ip\cdot(x-q)/\eps}.
$$

Straightforward calculation gives
$$
\eqref{eq:inner_product}=\frac{2^{d}}{(\pi\eps)^{d}}\frac{\det\left(\Re \Theta^x\right)^\frac14\det\left(\Re\Theta^y\right)^{\frac14}}
{\det\left(2\Theta^x\right)^\frac12\det\left(2\Theta^y\right)^\frac12}
 \int e^{\frac{i}{\eps}\Psi^{\kappa,\kappa'}(q_0,p_0,q,p)}w(q,p)\;dq\,dp,
$$
where
\begin{align*}
&\Psi^{\kappa,\kappa'}(q,p,q_0,p_0)=S^\kappa(q,p)\\
&+\frac12(q-q_0)(p+p_0)-\frac12(X^\kappa(q,p)-X^{\kappa'}(q_0,p_0))(\Xi^\kappa(q,p)+\Xi^{\kappa'}(q_0,p_0))\\
&+i\left(q_0-q\right)\cdot\Theta^y\left(q_0-q\right)/4
+i\left(p_0-p\right)\cdot\left(\Theta^y\right)^{-1}\left(p_0-p\right)/4\\
&+i\left(X^{\kappa'}(q_0,p_0)-X^\kappa(q,p)\right)\cdot\Theta^x\left(X^{\kappa'}(q_0,p_0)-X^\kappa(q,p)\right)/4\\
&+i\left(\Xi^{\kappa'}(q_0,p_0)-\Xi^\kappa(q,p)\right)\cdot\left(\Theta^x\right)^{-1}\left(\Xi^{\kappa'}(q_0,p_0)-\Xi^\kappa(q,p)\right)/4.
\end{align*}
We chose $\sigma\in C_0^\infty(\Rbb^{2d},\Rbb)$ with $\sigma=1$ in an neighborhood of $(q_0,p_0)$ and split the integral into
\begin{align}
&\left\langle g_{\kappa'(q_0,p_0)}^{\eps,\bar{\Theta^x}},\Ical^\eps\left(\kappa;\sigma w;\Theta^x,\Theta^y\right)g_{\kappa(q_0,p_0)}^{\eps,\Theta^y}\right\rangle\label{eq:inner_product1}\\
+&\left\langle g_{\kappa'(q_0,p_0)}^{\eps,\bar{\Theta^x}},\Ical^\eps\left(\kappa;(1-\sigma)w;\Theta^x,\Theta^y\right)g_{\kappa(q_0,p_0)}^{\eps,\Theta^y}\right\rangle\label{eq:inner_product2}.
\end{align}

It is easily seen that
$$
\Im\Psi(q,p,q_0,p_0)=0\textrm{ and }(\nabla_{(q,p)}\Re\Psi)(q,p,q_0,p_0)=0
$$
if and only if  $(q,p)=(q_0,p_0)$ and $\kappa(q_0,p_0)=\kappa'(q_0,p_0)$. Thus the phase in~\eqref{eq:inner_product2} is non-stationary on the support of $w(1-\sigma)$, so after integrations by parts with the operator
$$
\frac{-i\eps}{\|\nabla_{(q,p)}\Psi(q_0,p_0,q,p)\|^2} \bar{\nabla_{(q,p)}\Psi(q_0,p_0,q,p)}\cdot\nabla_{(q,p)}
$$
we have $\lim_{\eps\to 0}\eqref{eq:inner_product2}=0$. By the same argument, the case
$\kappa(q_0,p_0)\neq\kappa'(q_0,p_0)$ gives $\lim_{\eps\to 0}\eqref{eq:inner_product1}=0$.
In the case $\kappa(q_0,p_0)=\kappa'(q_0,p_0)$, we have
$$
{\rm Hess}_{(q_0,p_0)}\Psi^{\kappa,\kappa}=\frac{i}{2}
\begin{pmatrix}
\Theta^y&0\\
0&\left(\Theta^y\right)^{-1}
\end{pmatrix}
+\frac{i}{2}
F^\kappa(q_0,p_0)^\dagger
\begin{pmatrix}
\Theta^x&0\\
0&\left(\Theta^x\right)^{-1}
\end{pmatrix}
F^\kappa(q_0,p_0),
$$
at the stationary points, so by the Stationary Phase Theorem (Theorem~7.7.5. in~\cite{[Hoermander]})
$$
\lim_{\eps\to 0}\left\langle g_{\kappa(q_0,p_0)}^{\eps,\bar{\Theta^x}},\Ical^\eps\left(\kappa;\sigma w;\Theta^x,\Theta^y\right)g_{\kappa(q_0,p_0)}^{\eps,\Theta^y}\right\rangle=C[\kappa;\Theta^x;\Theta^y]w(q_0,p_0),
$$
with the non-vanishing constant 
\begin{align*}
&C[\kappa;\Theta^x;\Theta^y]=2^{2d}
\frac{\det\left(\Re \Theta^x\right)^\frac14\left(\Re\Theta^y\right)^{\frac14}}
{\det\left(\Theta^x\right)^\frac12\det\left(\Theta^y\right)^\frac12}\\
&\qquad\det\left(\begin{pmatrix}
\Theta^y&0\\
0&\left(\Theta^y\right)^{-1}
\end{pmatrix}
+F^\kappa(q_0,p_0)^\dagger
\begin{pmatrix}
\Theta^x&0\\
0&\left(\Theta^x\right)^{-1}
\end{pmatrix}
F^\kappa(q_0,p_0)\right)^{-\frac12}.
\end{align*}

Subsuming this discussion, we have
\begin{align*}
0&=\lim_{\eps\to 0}\left\langle g_{\kappa_1(q_0,p_0)}^{\eps,\Theta^x},\left[\Ical^\eps(\kappa_1;u;\Theta^x,\Theta^y)-\Ical^\eps(\kappa_2;v;\Theta^x,\Theta^y)\right]g_{\kappa_1(q_0,p_0)}^{\eps,\Theta^y}\right\rangle\\
&=\begin{cases}
C[\kappa_1;\Theta^x;\Theta^y]u(q_0,p_0)&\kappa_1(q_0,p_0)\neq\kappa_2(q_0,p_0)\\
C[\kappa_1;\Theta^x;\Theta^y]u(q_0,p_0)-C[\kappa_2;\Theta^x;\Theta^y]v(q_0,p_0)&\kappa_1(q_0,p_0)=\kappa_2(q_0,p_0)
\end{cases}
\end{align*}
In the case $\kappa_1(q_0,p_0)\neq\kappa_2(q_0,p_0)$ we immediately get $u(q_0,p_0)=0$ and by symmetry $v(q_0,p_0)=0=u(q_0,p_0)$.
In the case $\kappa_1(q_0,p_0)=\kappa_2(q_0,p_0)$, we either have $u(q_0,p_0)=0$ or $u(q_0,p_0)\neq 0$. In the first case, we immediately get $v(q_0,p_0)=0=u(q_0,p_0)$. In the second case $u$ does not vanish in a neighbourhood of $(q_0,p_0)$. Hence $\kappa_1=\kappa_2$ in the same neighborhood, thus $C[\kappa_1;\Theta^x;\Theta^y]=C[\kappa_2;\Theta^x;\Theta^y]$ and so $u(q_0,p_0)=v(q_0,p_0)$.
\end{proof}

\section{Uniform approximations of the unitary group}
In this section we will combine the results of Section~\ref{section:composition} to our main result.
\begin{theorem}\label{theo:main_result}
Let $U^\eps(t,s)$ be the propagator associated to the time-dependent Schr\"odinger-equation
$$
i\eps\dt \psi^\eps(t)=H^\eps(t)\psi^\eps(t),\qquad\psi^\eps(s)=\psi^\eps_s\in L^2(\Rbb^d,\Cbb)
$$
on the time-interval $-T\leq s,t\leq T$ where $H^\eps(t)=op^\eps(h_0+\eps h_1)$ with subquadratic $h_0(t,x,\xi)$ and sublinear $h_1(t,x,\xi)$, both polynomial in $\xi$. Moreover let
\begin{itemize}
\item $\Theta^y\in\mathrm{Gl}(d)$ be complex symmetric with positive definite real part and
\item $\Theta^x\in C^1(\Rbb,\mathrm{Gl}(d))$ be complex symmetric  fulfilling $0<\gamma\,\id\leq\Re\Theta^x(t)\leq\gamma'\,\id$ for all $t\in[-T,T]$ in the sense of quadratic forms.
\end{itemize}
Then
$$
\sup_{-T\leq s,t\leq T}\left\|U^\eps(t,s)-\Ical^\eps\left(\kt;\sum\limits_{n=0}^N\eps^n u_n;\Theta^x(t),\Theta^y\right)\right\|_{L^2\to L^2}\!\!\!\!\!\!\!\!\leq C(T)\eps^{N+1},
$$
where $\kt$ and the $u_n$ are uniquely given as
\begin{itemize}
\item the Hamiltonian flow associated to $h_0$ and
\item the solutions of
\begin{align*}
&\dt u_n\left(t,s,q,p\right)=\frac{1}{2}u_{n}(t,s,q,p) \times\\
&\quad\left[\tr\left(\Zcalinvts\dt \Zcalts\right)-i h_1\left(t,X^\kt(q,p),\Xi^\kt(q,p)\right)\nonumber
\vphantom{\tr\left(Z^{-1}(F^\kt(q,p);\Theta^x,\Theta^y)\dt Z(F^\kt(q,p);\Theta)\right)}\right]\nonumber\\
&-\sum\limits_{k=1}^{d}\div\left[\left(\partial_{z_k}\left(\dt\Theta^x(t)\Zcalinvts e_k\,u_{n-1}\right)\right)^\dagger
\Zcalinvts\right]\\
& +i\sum_{j=2}^{n-2}L_{j}[h_0(t);\kt;\Theta^x(t),\Theta^y] u_{n+1-j}\\
& +i\sum_{j=1}^{n-3}L_{j}[h_1(t);\kt;\Theta^x(t),\Theta^y] u_{n-j}\nonumber
\end{align*}
with initial conditions 
\begin{align*}
u_0(s,s,q,p)&=\det\left(\Theta^x(s)+\Theta^y\right)^{1/2}\\
u_n(s,s,q,p)&=0,\quad n\geq 1.
\end{align*}
\end{itemize}
\end{theorem}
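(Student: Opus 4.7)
The strategy is to construct an asymptotic solution in the sense of Lemma~\ref{lemma:gronwall}: I set
$$U_N^\eps(t,s) := \Ical^\eps\Bigl(\kt;\sum_{n=0}^N\eps^n u_n(t,s,\cdot,\cdot);\Theta^x(t),\Theta^y\Bigr)$$
and seek $\kt$ and the $u_n$ so that $(i\eps\dt - H^\eps(t))U_N^\eps(t,s) = \eps^{N+1} R_N^\eps(t,s)$ with $R_N^\eps$ $L^2$-bounded uniformly in $\eps$. First I would apply Proposition~\ref{prop:comp_time} to $i\eps\dt U_N^\eps$ and Proposition~\ref{prop:comp_PDO} (once with $h_0$, once with $\eps h_1$) to $op^\eps(h_0+\eps h_1) U_N^\eps$; both expressions are then FIOs with the same canonical transformation and phase, so by linearity in the symbol their difference is an FIO whose symbol is polynomial in $\eps$ of degree at most $N+1$ plus an $\eps$-uniform $O(\eps^{N+2})$ tail. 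Matching coefficients of $\eps^m$ for $m=0,\dots,N$ yields a tower of equations for the unknowns $\kt$ and the $u_n$.

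The order-$\eps^0$ equation $u_0[\dt X^\kt\cdot\Xi^\kt - \dt S^\kt - h_0\circ\kt]=0$ is automatic once $\kt$ is the Hamiltonian flow of $h_0$ and $S^\kt$ is the associated classical action; this fixes $\kt$ uniquely. At order $\eps^1$, the divergence terms from Proposition~\ref{prop:comp_time} and Proposition~\ref{prop:comp_PDO} cancel because Hamilton's equations give $\dt\Xi^\kt-i\Theta^x\dt X^\kt = -(h_{0,x}+i\Theta^x h_{0,\xi})\circ\kt$; the remaining trace terms, together with the $h_1\circ\kt$ contribution from the $\eps h_1$ piece of $H^\eps$, combine via the chain rule applied to $\Zcal=Z_z^{\kt}(\Theta^x)^{1/2}$ into the ODE for $u_0$ announced in the theorem. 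At order $\eps^{n+1}$ the same cancellations leave a linear first-order ODE for $u_n$ whose inhomogeneity is given by the operators $L_j[h_0;\kt;\Theta^x,\Theta^y]$ and $L_j[h_1;\kt;\Theta^x,\Theta^y]$ of Proposition~\ref{prop:comp_PDO} applied to the already-determined $u_{n+1-j}$ and $u_{n-j}$ respectively, plus the $v_2$-type divergence from Proposition~\ref{prop:comp_time} acting on $u_{n-1}$; this is exactly the right-hand side of the Cauchy problem in the statement. The initial conditions follow from $\kappa^{(s,s)}=\id$ together with Remark~\ref{remark:identity}(3): demanding $U_N^\eps(s,s)=\id$ forces $u_0(s,s)=\det(\Theta^x(s)+\Theta^y)^{1/2}$ and $u_n(s,s)=0$ for $n\geq 1$.

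Each $u_n$ is then determined by an explicit linear ODE in $t$ with coefficients in $S[0;2d]$ uniformly on $[-T,T]$ (by the class-$\Bcal$ assumption on $\kt$, the sub-linearity of $h_1$, and Lemma~\ref{lemma:Z_inv} for the invertibility of $\Zcal$), so the $u_n$ exist and belong to $S[0;2d]$ uniformly in $t,s\in[-T,T]$. The remainder symbol, built from the $v_{N+1}^\eps$-type contributions of Propositions~\ref{prop:comp_PDO} and~\ref{prop:comp_time}, lies in a bounded subset of $S[+\infty;(d,d,d)]$ uniformly in $\eps$, so Theorem~\ref{theo:FIO}(2) provides an $\eps$-uniform $L^2$-bound on $R_N^\eps(t,s)$ for $(s,t)\in[-T,T]^2$. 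An application of Lemma~\ref{lemma:gronwall} yields the asserted error bound, and Proposition~\ref{prop:uniqueness} applied inductively order-by-order (after dividing the difference of two candidate approximations by the leading power of $\eps$) gives uniqueness of the flow $\kt$ and the symbols $u_n$. The principal obstacle is the algebraic bookkeeping at order $\eps^1$: one must verify that the three a priori unrelated trace contributions collapse into the compact quantity $\frac12\tr(\Zcalinv\dt\Zcal)$, after which the higher-order transport equations follow by the same mechanism without further surprises.
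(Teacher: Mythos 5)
Your proposal tracks the paper's own proof essentially step for step: construct the asymptotic ansatz, expand $(i\eps\dt - op^\eps(h_0+\eps h_1))\Ical^\eps(\kt;\sum\eps^n u_n)$ via Propositions~\ref{prop:comp_time} and~\ref{prop:comp_PDO}, match powers of $\eps$, identify the order-$0$ equation as forcing $\kt$ to be the Hamiltonian flow, verify at order $\eps^1$ that the two divergence terms cancel via Hamilton's equations $\dt\Xi^\kt-i\Theta^x\dt X^\kt=-(h_{0,x}+i\Theta^x h_{0,\xi})\circ\kt$ and that the surviving trace terms assemble by the chain rule into $\tfrac{i}{2}\tr(\Zcalinv\dt\Zcal)$, then close the recursion at higher orders, impose the identity initial condition via Remark~\ref{remark:identity}, bound the remainder FIO, invoke Lemma~\ref{lemma:gronwall}, and get uniqueness from Proposition~\ref{prop:uniqueness} order by order. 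This is exactly the paper's argument.

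One small imprecision worth flagging: you assert the remainder symbol lies in a bounded subset of $S[+\infty;(d,d,d)]$ and then cite Theorem~\ref{theo:FIO}(2); but that theorem requires a bounded symbol, i.e.\ class $S[0;\cdot]$, for the $L^2$ estimate. The paper's proof makes the needed sharper claim that the top-order remainder symbol is of class $S[0;3d]$ (which follows because, after the subquadratic $h_0$ has been Taylor-expanded around $\kappa(q,p)$ to order $\ge 2$, only bounded derivatives of $h_0$ and bounded factors from $F^\kappa$ and $\Zcalinv$ remain). Without this $S[0]$ membership the appeal to the $L^2$-continuity result does not go through, so you should state it explicitly rather than the weaker $S[+\infty]$.
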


\begin{corollary}\label{coro:ehrenfest}
Under the additional assumption
$$
\sum_{2\leq |\alpha|\leq 4d+7}\|(\partial^\alpha_{(q,p)} h)(t,q,p)\|_{L^\infty(\Rbb\times\Rbb^{2d})}<\infty,
$$
we have the following result on the Ehrenfest timescale $T(\eps)=C_T\log(\eps^{-1})${\rm :}
$$
\sup_{-T(\eps)\leq s,t\leq T(\eps)}\left\|U^\eps(t,s)-\Ical^\eps\left(\kt;u_0;\Theta^x(t),\Theta^y\right)\right\|\leq C(C_T)\eps^{1-\rho(C_T)},
$$
where $\rho(C_T)$ can be made arbitrary small, if $C_T$ is chosen small enough.
\end{corollary}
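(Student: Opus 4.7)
The plan is to reuse the construction from Theorem~\ref{theo:main_result} at order $N=0$, where the asymptotic identity produced in its proof reads
$$
(i\eps\dt-H^\eps(t))\Ical^\eps(\kt;u_0;\Theta^x(t),\Theta^y)=\eps^2\Ical^\eps(\kt;r^\eps;\Theta^x(t),\Theta^y),
$$
with $r^\eps$ an explicit symbol built from the $v_2$-terms of Proposition~\ref{prop:comp_PDO} (applied to $h_0$ and $h_1$) and of Proposition~\ref{prop:comp_time} (from the $\dt\Theta^x$ contribution). The $u_0$-equation in the statement is precisely the transport equation that annihilates the $\eps^0$ and $\eps^1$ contributions coming from Propositions~\ref{prop:comp_PDO}--\ref{prop:comp_time}. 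The only new work, compared to the fixed-$T$ case, is to bookkeep every constant with respect to $\eps^{-\rho(C_T)}$ on the Ehrenfest window $|t-s|\leq T(\eps)=C_T\log\eps^{-1}$.

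First I would collect the \emph{uniform geometric bounds}. By Proposition~\ref{prop:subquadratic} together with Remark~\ref{rem:ehrenfest}(ii), for every fixed $k$ there is $C_k(C_T)$ with $M_k^0[F^\kt]\leq C_k(C_T)\eps^{-\rho(C_T)}$, and $\rho(C_T)$ can be made as small as desired by shrinking $C_T$. Lemma~\ref{lemma:Z_inv} gives $|\det\Zcal|$ bounded below uniformly in $(t,q,p)$ by a constant depending only on $\Theta^x,\Theta^y$ (the superadditivity argument does not involve $\kt$). Hence the coefficients of $L_n[h_i;\kt;\Theta^x,\Theta^y]$, being rational in $F^\kt$ with denominator a power of $\det\Zcal$, obey bounds of the form $C'_k(C_T)\eps^{-M\rho(C_T)}$ by the remark after Proposition~\ref{prop:comp_PDO}, with a fixed integer $M$; I absorb $M$ into $\rho$ by renaming.

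Next I control the two symbols that enter the $L^2$ estimate. Solving the $u_0$-ODE gives $u_0=(\det\Zcalts)^{1/2}e^{-\frac{i}{2}\int_s^t h_1\circ\kappa^{(\tau,s)}d\tau}$ up to constants, so $u_0$ and each of its $(q,p)$-derivatives up to order $4d+1$ is bounded by $C(C_T)\eps^{-\rho(C_T)}$, again using the Ehrenfest Jacobian estimates. The remainder symbol $r^\eps$ involves derivatives of $h_0$ of orders $3$--$4$, derivatives of $h_1$ of orders $1$--$2$, and derivatives of $F^\kt$ of order $\leq 2$; to bound its $(q,p)$-derivatives up to order $4d+1$ we therefore need derivatives of $F^\kt$ up to order $4d+3$, which by Proposition~\ref{prop:subquadratic} requires $\|\partial^\alpha h\|_\infty$ for $|\alpha|$ up to $4d+5$, and the two extra orders needed to compose with $\kt$ and with $h_1$ account for the full hypothesis $|\alpha|\leq 4d+7$. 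All these seminorms are then bounded by $C(C_T)\eps^{-\rho(C_T)}$ after, again, enlarging $\rho$ by the fixed factor $M$.

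Finally I close with Theorem~\ref{theo:FIO} and Lemma~\ref{lemma:gronwall}. The $L^2$ bound~\eqref{eq:full} gives
$$
\|\Ical^\eps(\kt;r^\eps;\Theta^x(t),\Theta^y)\|_{L^2\to L^2}\leq C(C_T)\eps^{-\rho(C_T)},
$$
where I invoke Remark~\ref{remark:identity}(i) (the explicit polynomial dependence of the constant on $M_0^\kappa$ available from~\cite{[RousseSwart]}) to absorb the factor $C(M_0^\kt)$ into $\eps^{-\rho(C_T)}$. Applying Lemma~\ref{lemma:gronwall} with $N=0$ and integrating over $|t-s|\leq T(\eps)$ yields
$$
\|U^\eps(t,s)-\Ical^\eps(\kt;u_0;\Theta^x(t),\Theta^y)\|_{L^2\to L^2}\leq 2T(\eps)\cdot\eps\cdot C(C_T)\eps^{-\rho(C_T)}\leq C'(C_T)\eps^{1-\rho'(C_T)},
$$
the $\log\eps^{-1}$ factor being absorbed since $\log\eps^{-1}=o(\eps^{-\delta})$ for any $\delta>0$. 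Choosing $C_T$ small makes $\rho'(C_T)$ arbitrarily small. The main obstacle is purely technical: verifying that the integer $M$ by which $\rho$ gets multiplied (through $u_0$, through $r^\eps$, through its $4d+1$ derivatives, and through the explicit form of the $L^2$ constant) stays \emph{fixed} independently of $C_T$ and $\eps$, so that the freedom in Remark~\ref{rem:ehrenfest}(ii) genuinely closes the argument.
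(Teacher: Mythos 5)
Your proposal is correct and follows essentially the same route as the paper's own proof: bound the order-$\eps^2$ remainder symbol via the Ehrenfest estimate $M_k^0[F^{\kappa^{(t,s)}}]\leq C(C_T)\eps^{-\rho(C_T)}$ from Proposition~\ref{prop:subquadratic} and Remark~\ref{rem:ehrenfest}, use its polynomial dependence on $F^{\kappa^{(t,s)}}$ together with the lower bound on $\det\mathcal{Z}$ from Lemma~\ref{lemma:Z_inv}, and then close with Theorem~\ref{theo:FIO} and Lemma~\ref{lemma:gronwall}, absorbing the $\log\eps^{-1}$ factor into $\eps^{-\rho}$. Your bookkeeping of the derivative counts behind the $4d+7$ hypothesis and of the constant in~\eqref{eq:full} is simply a more explicit version of what the paper leaves implicit.
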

\begin{remark}\hfill
\begin{enumerate}
\item We recall that $\Zcalts=(i\left(\Theta^y\right)^{-1}\:\:\id)F^\kt(q,p)^\dagger(-i\Theta^x(t)\:\:\id)^\dagger$,
thus the dependence of $\Zcalts$ on $q$ and $p$ is only via the Jacobian of the flow.

\item
The expression for the leading order symbol is
\begin{align*}
u_0(t,s,q,p)&=\left(\det\Theta^y\Zcalts\right)^\frac12 e^{-i\int\limits_s^t h_1\left(\tau,\kappa^{(\tau,s)}(q,p)\right)d\tau},
\end{align*}
where the branch of the square root is defined by continuity in $t$ starting from $u_0(s,s,q,p)=\det\left(\Theta^x(s)+\Theta^y\right)^\frac12$, compare~Remark~\ref{remark:identity}. The corresponding FIO is known as the Herman-Kluk propagator in the chemical literature. Notice that the dependence of $u$ on $q$ and $p$ is only via $F^\kt(q,p)$. Likewise, the $(q,p)$-dependence in $u_k$ is only via $F^\kt(q,p)$ and its derivatives with respect to $q$ and $p$.

\item As an easy corollary we get that the FIO defined in the last theorem approximately inherits the properties of $U(t,s)$, i.e. it is almost unitary in the sense that
\begin{align*}
\left\|\Ical^\eps\left(\kt;u;\Theta^x(t),\Theta^y\right)\Ical^\eps\left(\kt;u;\Theta^x(t),\Theta^y\right)^*-\id\right\|\leq& C_N\eps^{N+1}\\
\left\|\Ical^\eps\left(\kt;u;\Theta^x(t),\Theta^y\right)^*\Ical^\eps\left(\kt;u;\Theta^x(t),\Theta^y\right)-\id\right\|\leq& C_N\eps^{N+1},
\end{align*}
where
$u=\sum_{n=0}^N\eps^n u_n$ and it almost fulfills the group property, i.e.
$$
\left\|\Ical^\eps\left(\kappa^{(t,t')}\!;u;\id,\id\right)\Ical^\eps\left(\kappa^{(t',s)}\!;u;\id,\id\right)\!-\!\Ical^\eps\left(\kt\!;u;\id,\id\right)\right\|\leq C_N'\eps^{N+1}.
$$
The result also holds for general $\Theta^x$ and $\Theta^y$. The possibility of stating the correct dependence of the symbol on the matrices is left to the reader.
\item
In the case of a linear flow
$$
\kappa^{(t,s)}(q,p)=M(t,s)
\begin{pmatrix}q\\p\end{pmatrix}
$$
the approximation becomes exact as the symbols $v_n$ in Propositions~\ref{prop:comp_PDO} and~\ref{prop:comp_time} vanish for $n\geq 2$. This can be rephrased as
$$
\Ical^\eps(\kt;u_0;\Theta^x,\Theta^y)=\mathrm{Met}^\eps\left( F^\kt(q,p)\right),
$$
where $\mathrm{Met}^\eps$ denotes the $\eps$-dependent metaplectic representation.

\item The Ehrenfest result does not generalise to the higher order corrections. This is in accordance with the situation in~{\rm \cite{[Bambusi]}}, where holomorphicity of the symbol is required to get higher order versions of Egorov's Theorem on the Ehrenfest timescale.

\item The proof will produce the following byproduct: the so-called Frozen Gaussian Approximation, which is obtained by still choosing $\kt$ as the Hamiltonian flow but keeping $u_0$ constant for all $q,p$ and $t$, is an asymptotic solution of the Schr\"odinger equation of order $O(\eps)$. Thus it approximates the unitary group for the short times of order $\eps$. It will not be a valid approximation for longer times because of the uniqueness of the symbol.
\end{enumerate}
\end{remark}
\begin{proof}
By Theorem~\ref{theo:FIO}, an FIO associated to a $C^1$ family $\kt$ of canonical transformation of class $\Bcal$ and $(x,y)$-independent symbol $u=\sum_{n=0}^{N}\eps^n u_n$, $u_n\in C^1(\Rbb,S[0;2d])$ leaves $\Scal(\Rbb^d,\Cbb)$ invariant. Thus, we can plug such an operator as an ansatz into the time-dependent Schr\"odinger equation~\eqref{eq:TDSE}. By Propositions~\ref{prop:comp_PDO} and \ref{prop:comp_time} we have a representation
\begin{align*}
&\quad\left(i\eps\dt -op^\eps(h_0+\eps h_1)\right)\Ical^\eps\left(\kt;\sum_{n=0}^{N}\eps^n u_n;\Theta^x(t),\Theta^y\right)\\
&=\Ical^\eps\left(\kt;\sum_{n=0}^{N+1}\eps^n v_{n};\Theta^x(t),\Theta^y\right)
+\eps^{N+2} \Ical^\eps\left(\kt;v^\eps_{N+2};\Theta^x(t),\Theta^y\right)
\end{align*}
on $\Scal(\Rbb^d,\Cbb)$. We will show that the $v_n$, $0\leq n\leq N+1$ vanish, if $\kt$ and $u_n\in S[0,2d]$, $0\leq n\leq N$ are chosen properly. Moreover, it will turn out that $v^\eps_{N+2}$ is of class $S[0;3d]$. Thus, by Theorem~\ref{theo:FIO}, $\Ical^\eps\left(\kt;\sum_{n=0}^{N}\eps^n u_n;\Theta^x(t),\Theta^y\right)$ is an asymptotic solution of order $N+2$. The statement will then follow by Lemma~\ref{lemma:gronwall}.

Combining~Propositions~\ref{prop:comp_PDO} and~\ref{prop:comp_time}, one recognises
$v_0$ as the product of $u_0$ and
\begin{align}
\left(-\dt S^\kt(q,p)+\dt X^\kt(q,p)\cdot\Xi^\kt(q,p)-h_0\left(t,\kt(q,p)\right)\right)\label{eq:v0_HK}.
\end{align}
As we do not expect $\Ical^\eps(\kappa;0;\Theta^x,\Theta^y)=0$ to be a good approximation of $U(t,s)$, we require~\eqref{eq:v0_HK} to vanish. By combining its derivatives with respect to $p$ and $q$, it is easily seen that this is the case if and only if $\kt$ is the Hamiltonian flow associated to $h_0$.

There are several parts which contribute to $v_1$:
\begin{enumerate}
\item the zeroth order terms of Propositions~\ref{prop:comp_PDO} and~\ref{prop:comp_time} applied to $u_1$,
\item the first order terms of Propositions~\ref{prop:comp_PDO} and~\ref{prop:comp_time} applied to $u_0$,
\item the zeroth order term of Proposition~\ref{prop:comp_PDO} applied to $u_0$ for the subprincipal symbol $h_1$.
\end{enumerate}
Thus we get the following expression for $v_1$:
\begin{align}
&\quad u_1\left[-\dt S^\kt\!(q,p)+\dt X^\kt\!(q,p)\cdot\Xi^\kt(q,p)-h_0\left(t,\kt(q,p)\right)\right]\label{eq:v1_1}\\
&\quad +\div\left(\left[(\partial_x h_0+i\Theta^x(t)\partial_\xi h_0)\left(t,\kt(q,p)\right)\right]^\dagger \Zcalinvts u_0\right)\label{eq:v1_2}\\
&\quad +\div\left(\left(\dt \Xi^\kt(q,p)-i\Theta^x(t)\dt X^\kt(q,p)\right)^\dagger \Zcalinvts u_0\right)\label{eq:v1_3}\\
&\quad -u_0\frac{1}{2}\tr\left(\Zcalinvts \partial_{z}\left[(\partial_x h_0+i\Theta^x(t)\partial_\xi h_0)\left(t,\kt(q,p)\right)\right]\right) \nonumber\\
&\quad +i\dt u_0-\frac{i}{2}u_0\tr\left(\Zcalinvts X^\kappa_z(q,p)\dt\Theta^x(t)\right)\nonumber\\
&\quad -u_0h_1\left(t,\kt(q,p)\right)\nonumber\\
&=i\dt u_0-\frac{i}{2}u_0\tr\left(\Zcalinvts \dt \Zcalts\right)-u_0h_1\left(t,\kt(q,p)\right)\nonumber,
\end{align}
as~\eqref{eq:v1_1}, \eqref{eq:v1_2} and \eqref{eq:v1_3} vanish because of the choice of $\kt$ as the Hamilton flow.

As the linearisation of $\det(A)$ is $\det(A)\tr(A^{-1} \mathrm{d}A)$ for invertible $A$, the equation $v_1=0$ with initial conditions that recover identity, is solved by
$$
u_0(t,s,q,p)=\left(\det(\Theta^y\Zcalts)\right)^{\frac{1}{2}} \exp\left(-i\int\limits_s^t h_1\left(\tau,\kappa^{(\tau,s)}(q,p)\right)d\tau\right),
$$
which is of class $S[0;2d]$.

$v_n$ is given by the following expression:
\begin{align*}
&i\dt u_{n-1}-\frac{i}{2}u_{n-1}\tr\left(\Zcalinvts \dt \Zcalts\right)
-u_{n-1}h_1\left(t,\kt(q,p)\right)\\
&-\sum\limits_{k=1}^{d}\div\left(\left(\partial_{z_k}\left(\dt\Theta^x(t)\Zcalinvts e_k\,u_{n-2}\right)\right)^\dagger \Zcalinvts\right)\\
& -i\sum_{j=2}^{n-2}L_{j}[h_0(t);\kt;\Theta^x(t),\Theta^y] u_{n-j}\\
& -i\sum_{j=1}^{n-3}L_{j}[h_1(t);\kt;\Theta^x(t),\Theta^y] u_{n-j-1}\nonumber,
\end{align*}
where we already dropped the terms analogous to~\eqref{eq:v1_1}--\eqref{eq:v1_3}. The equation $v_n=0$ is easily solved by variation of the constant and has a solution in $S[0;2d]$.

We finally note that the highest order symbol is of class $S[0;3d]$. Thus, we have established that the constructed FIO is an asymptotic solution of order $N+2$ on the class of Schwartz functions, so the result follows by the strategy outlined at the beginning of the proof.

We turn to the uniqueness. Assume that there are $\tilde{\kappa}^{(t,s)}$ and $\tilde{u}\in S[0;2d]$ such that 
$$
\left\|U^\eps(t,s)-\Ical^\eps\left(\tilde{\kappa}^{(t,s)};\tilde{u};\Theta^x(t),\Theta^y\right)\right\|\leq C'(T)\eps.
$$
In this case we have
$$
\left\|\Ical^\eps\left(\kappa^{(t,s)};u_0;\Theta^x(t),\Theta^y\right)-\Ical^\eps\left(\tilde{\kappa}^{(t,s)};\tilde{u};\Theta^x(t),\Theta^y\right)\right\|\leq \left(C(T)+C'(T)\right)\eps
$$
and thus we get $\tilde{\kappa}^{(t,s)}=\kt$ and $\tilde{u}=u_0$ on $\supp u_0=\Rbb^{2d}$ by Proposition~\ref{prop:uniqueness}. The uniqueness of the higher order corrections follows inductively by the same kind of argument.
\end{proof}

\begin{proof}(of Corollary~\ref{coro:ehrenfest})
To extend the result to the Ehrenfest timescale, we have to study the dependence of the remainder's symbol $v_2^\eps$ on $T=T(\eps)$ and to show that the growth of
$$
\sum_{|\alpha|\leq 4d+1}\left\|\partial^\alpha_x v_2^\eps\right\|_{L^\infty}
$$
does not exceed $O(\eps)$ in the $\eps\to 0$ limit if $C_T$ is sufficiently small.

The dependence comes from the elements of $F^\kt(q,p)$ and its derivatives. By ~\eqref{eq:bound_ehrenfest}, they allow for a bound of the form $C'(C_T) e^{-\rho(C_T)}$, where $\rho(C_T)$ can be made arbitrary small if $C_T$ is chosen small enough. Moreover, $v^\eps_2$ has polynomial growth in these quantities, which follows from the form of the differential operators of Proposition~\ref{prop:comp_PDO}, the explicit expression of $u_0$ and the bound away from zero of the determinant of $\Zcalts$, compare the proof of Lemma~\ref{lemma:Z_inv}. Combining these facts, the result follows.
\end{proof}

\begin{appendix}
\section{Gaussian integrals with non-real matrices}
We consider the convex cone $\Ccal$ of complex symmetric matrices with positive definite real part. Every matrix of $\Ccal$ is invertible with its spectrum included in the open half plane $\{z|\Re z>0\}$. It follows from matrix theory (see~\cite{[JohnsonOkuboReams]}) that each element of $\Ccal$ admits an unique square root in $\Ccal$. Furthermore, the square root of $M$ is given by the Dunford-Taylor integral (see~\cite{[Kato]} I.\textsection 5.6)
\begin{equation*}
M^{1/2}=\frac{1}{2\pi i}\int_\Gamma z^{1/2}(M-z)^{-1}\;d z
\end{equation*}
where the integration path is a closed contour in the half-plane $\{z|\Re z>0\}$ making a turn around each eigenvalue in the positive direction and the value of $z^{1/2}$ is chosen so that it is positive for real positive $z$. As a consequence, the square root $M^{1/2}$ is an holomorphic function of $M$. If one considers the computation of the Gaussian integral
\begin{equation*}
\frac{1}{(2\pi\eps)^{d/2}}\int_{\Rbb^d}e^{-\frac{M}{2\eps}x\cdot x}\;dx,
\end{equation*}
it is well-known that its value is given by $(\det M)^{1/2}=\det(M^{1/2})$ for positive definite real symmetric $M$. From the above discussion, it directly follows that this property extends to any matrix $M\in\Ccal$ (see Appendix A in~\cite{[Folland]} or Section 3.4. in~\cite{[Hoermander]} for an alternative explanation).
\end{appendix}

\end{document}